\PassOptionsToPackage{hidelinks}{hyperref}
\documentclass[lettersize,journal]{IEEEtran}
\IEEEaftertitletext{\vspace{-1.8\baselineskip}}
\usepackage{algorithm}
\usepackage{algpseudocode}
\usepackage{amsmath,amsfonts,amssymb}
\usepackage{amsthm}
\usepackage{mathtools}
\usepackage{bm}
\usepackage{dsfont}
\usepackage{mathrsfs}
\usepackage{enumitem}
\usepackage{array}
\usepackage{booktabs}
\usepackage{multirow}
\usepackage{graphicx}
\usepackage[caption=false,font=normalsize,labelfont=sf,textfont=sf]{subfig}
\usepackage{stfloats}
\usepackage{textcomp}
\usepackage{url}
\usepackage{verbatim}
\usepackage{comment}
\usepackage{xcolor}
\usepackage{soul}
\usepackage{microtype}
\usepackage{orcidlink}
\usepackage{cite}

\usepackage[compact]{titlesec}

\titlespacing*{\section}
{0pt}                  
{1.0ex plus 0.3ex minus 0.2ex}  
{0.7ex plus 0.2ex}              

\titlespacing*{\subsection}
{0pt}
{0.8ex plus 0.2ex minus 0.2ex}
{0.45ex plus 0.15ex}

\titlespacing*{\subsubsection}
{0pt}
{0.6ex plus 0.2ex minus 0.2ex}
{0.35ex plus 0.1ex}

\usepackage{tikz}
\usetikzlibrary{matrix}
\usetikzlibrary{snakes,arrows,shapes}
\usetikzlibrary{arrows.meta,positioning}
\usetikzlibrary{calc,backgrounds}

\definecolor{ink}{HTML}{272726}     
\definecolor{infill}{HTML}{DADAD9}  
\colorlet{stateone}{red!30}         
\colorlet{statezero}{blue!30}       
\colorlet{onearrow}{red!60!black}   
\colorlet{zeroarrow}{blue!60!black} 
\colorlet{netbg}{black!7}           
\colorlet{netedge}{black!32}
\colorlet{meshgray}{black!45}       

\newcommand{\carglyph}[3]{%
\begin{scope}[shift={(#1)}, line cap=round, line join=round]
  \draw[fill=#2, draw=ink, line width=0.7pt]
        (-0.71,-0.11)
        .. controls (-0.73,-0.02) and (-0.69, 0.09) .. (-0.54, 0.125)
        .. controls (-0.44, 0.145) and (-0.37, 0.163) .. (-0.31, 0.181)
        .. controls (-0.26, 0.228) and (-0.23, 0.267) .. (-0.16, 0.270)
        .. controls (-0.08, 0.283) and ( 0.09, 0.283) .. ( 0.17, 0.271)
        .. controls ( 0.23, 0.250) and ( 0.25, 0.226) .. ( 0.29, 0.205)
        .. controls ( 0.35, 0.176) and ( 0.39, 0.162) .. ( 0.43, 0.153)
        .. controls ( 0.50, 0.143) and ( 0.55, 0.136) .. ( 0.60, 0.128)
        .. controls ( 0.67, 0.117) and ( 0.71, 0.070) .. ( 0.71,-0.01)
        .. controls ( 0.715,-0.06) and ( 0.71,-0.09) .. ( 0.70,-0.11)
        -- cycle;
  \node[font=\scriptsize\bfseries, text=ink, inner sep=0pt] at (-0.02,0.04) {#3};
  \fill[netbg] (-0.469,-0.11) circle (0.143);
  \fill[netbg] ( 0.398,-0.11) circle (0.143);
  \draw[fill=ink, draw=ink, line width=0.4pt] (-0.469,-0.11) circle (0.105);
  \draw[fill=ink, draw=ink, line width=0.4pt] ( 0.398,-0.11) circle (0.105);
\end{scope}}

\usepackage{siunitx}

\usepackage[capitalize,nameinlink]{cleveref}
\newtheorem{theorem}{Theorem}
\newtheorem{lemma}{Lemma}
\newtheorem{corollary}{Corollary}

\newtheorem{conjecture}{Conjecture}
\newtheorem{prop}{Proposition}
\newtheorem{assumption}{Assumption}
\theoremstyle{remark}

\crefname{theorem}{Theorem}{Theorems}
\Crefname{theorem}{Theorem}{Theorems}

\crefname{lemma}{Lemma}{Lemmas}
\Crefname{lemma}{Lemma}{Lemmas}

\crefname{corollary}{Corollary}{Corollaries}
\Crefname{corollary}{Corollary}{Corollaries}

\crefname{definition}{Definition}{Definitions}
\Crefname{definition}{Definition}{Definitions}

\crefname{conjecture}{Conjecture}{Conjectures}
\Crefname{conjecture}{Conjecture}{Conjectures}

\crefname{prop}{Proposition}{Propositions}
\Crefname{prop}{Proposition}{Propositions}

\crefname{assumption}{Assumption}{Assumptions}
\Crefname{assumption}{Assumption}{Assumptions}

\crefname{remark}{Remark}{Remarks}
\Crefname{remark}{Remark}{Remarks}

\DeclareMathOperator{\sgn}{sgn}

\DeclareMathOperator{\diag}{diag}

\DeclareMathOperator*{\argmin}{\arg\!\min}

\newcommand{\fvec}{\mathbf{f}}
\newcommand{\hvec}{\mathbf{h}}
\newcommand{\Slack}{\mathrm{Slack}}
\newcommand{\UR}{U_R}
\newcommand{\US}{U_S}
\newcommand{\bmat}[1]{\begin{bmatrix}#1\end{bmatrix}}
\newcommand{\one}{\mathbf{1}}
\newcommand{\eone}{\mathbf{e}_1}
\newcommand{\etwo}{\mathbf{e}_2}

\newcommand{\E}{\mathbb{E}}

\newcommand{\ind}{\mathds{1}}

\newcommand{\limt}{\lim_{t \to \infty}}

\newcommand{\reals}{\mathbb{R}}

\begin{document}

\title{Strategic Information Transmission over Gossip Networks}

\author{%
\IEEEauthorblockN{%
Emirhan Tekez\orcidlink{0009-0001-6615-3240}, \IEEEmembership{Graduate Student Member,~IEEE,}
Melih Bastopcu\orcidlink{0000-0001-5122-0642}, \IEEEmembership{Member,~IEEE,}\\
Sinan Gezici\orcidlink{0000-0002-6369-3081}, \IEEEmembership{Fellow,~IEEE}%
}%
\thanks{The authors are with the Department of Electrical and Electronics Engineering, Bilkent University, Ankara, Türkiye
(e-mails: emirhan.tekez@bilkent.edu.tr; bastopcu@bilkent.edu.tr; gezici@ee.bilkent.edu.tr).}%
\thanks{A part of this paper will be presented at the 65th IEEE Conference on Decision and Control (CDC), HI, USA, December 15--18, 2026 \cite{tekez2026strategic}.}%

\IEEEcompsocitemizethanks{
This work was supported by the TÜBİTAK 2232-B program (Project No. 124C533) and the TÜBİTAK 2210-A
program.}
}
\maketitle

\begin{abstract}
We consider a fully connected gossip network of $n$ nodes that track a binary continuous-time Markov source through a strategic sender transmitting updates under a communication budget at a rate that depends on the source state. The receivers exchange packets through gossip and decide whether to follow the sender. We model this interaction as a Stackelberg game and analyze it through a stochastic hybrid systems (SHS) framework. We prove that the sender's budget constraint binds at every interior optimum, reducing its problem to a one-dimensional search on the budget line. When the sender pushes its preferred state at the higher rate, the receivers gossip at the highest available rate. Gossip has no direction of its own and works against the asymmetry in the sender's policy rather than reinforcing it. We prove that an optimistic Stackelberg equilibrium exists, and that it is unique and explicitly characterized whenever a policy on the strategic half of that line is feasible at the gossip cap. Monte Carlo simulations agree with the analytical recursion.
\end{abstract}

\begin{IEEEkeywords}
Gossip networks, strategic information transmission, age of information, stochastic hybrid systems, Stackelberg games.
\end{IEEEkeywords}

\section{Introduction}\label{sec:intro}
\vspace{-0.1cm}
 
\IEEEPARstart{M}{odern} networked systems such as autonomous vehicle fleets, wireless sensor networks, and smart factories rely on the timely dissemination of rapidly changing status information. A widely used metric for quantifying information freshness is the age of information (AoI), which measures the time elapsed since the generation of the most recently received update \cite{kaul2012real,yates2020age_survey,sun2022age_book}. Although AoI captures timeliness, it does not directly capture whether the information currently held by a receiver is correct. A recent update may become inaccurate immediately after a source transition, whereas an older update may still be correct if the source has reverted to a previous state. This limitation has motivated accuracy-aware monitoring metrics such as the age of incorrect information (AoII) and version-based views of Markov source monitoring \cite{maatouk2020aoii,salimnejad2025age_versions}.
 
Additionally, in many settings, the entity that generates and transmits status updates does not share the same objective as the entities that consume them. A ride-sharing platform, for instance, may prefer drivers to believe that demand is high in certain zones so that they reposition accordingly, even as the actual demand state fluctuates. Drivers who share demand information through informal peer-to-peer channels want an accurate picture, not a strategically filtered one. In such settings the sender controls when to communicate, which creates a strategic dimension that timeliness-driven models do not capture, while gossip among the receivers can amplify, dilute, or correct what the sender transmits.
 
This work brings together ideas from gossip-based information dissemination and strategic information transmission by studying strategic persuasion in a timeliness-based gossip network. We consider a sender that observes a binary continuous-time Markov chain (CTMC) source and transmits version-stamped updates to a fully connected network of $n$ receivers, as shown in Fig.~\ref{fig:system_model}. Unlike standard monitoring models, the sender is strategic and allocates its limited communication budget across state-dependent push rates, sending updates at rate $s$ in state~$1$ and at rate $c$ in state~$0$, subject to the constraint $s + c \le R$. The receivers collaborate through gossiping at rate $\lambda\in[0,\Lambda]$, accepting only fresher packets and deciding whether it is worthwhile to participate at all. The sender prefers the network to hold state-$1$ information as often as possible, but it must still provide enough useful information to satisfy the receivers' participation requirement. We model this interaction as a Stackelberg game in which the sender commits to a policy $(s,c)$, after which the receiver team decides whether to follow the sender’s messages and selects a common gossip rate $\lambda$.
\begin{figure}[t]
\centering
\resizebox{\columnwidth}{!}{%
\begin{tikzpicture}[
  >=Latex, line cap=round, line join=round,
  font=\footnotesize, text=ink,
  statenode/.style={circle, draw=ink, line width=1.0pt, minimum size=0.94cm,
                    inner sep=0pt, font=\small\bfseries, text=ink},
  flow/.style={->, line width=0.9pt, draw=ink},
  meshline/.style={draw=meshgray, line width=0.5pt},
  lbl/.style={font=\footnotesize, text=ink, inner sep=1.5pt},
  rate/.style={font=\normalsize, inner sep=2pt}
]

\node[statenode, fill=stateone]  (S1) at (1.15, 1.00) {1};
\node[statenode, fill=statezero] (S0) at (1.15,-1.00) {0};
\draw[flow] (S0) to[bend left=40] node[lbl, left,  pos=0.5] {$q_{01}$} (S1);
\draw[flow] (S1) to[bend left=40] node[lbl, right, pos=0.5] {$q_{10}$} (S0);
\node[lbl] at (1.15,-1.95) {source $Q(t)$};

\draw[flow] (2.24,0) -- node[lbl, above] {observes $Q(t)$} (4.56,0);
\begin{scope}[shift={(5.35,0)}]
  \draw[fill=infill, draw=ink, line width=1.5pt] (0,0.72) circle (0.34);
  \draw[fill=infill, draw=ink, line width=1.5pt]
        (-0.60,-0.68)
        .. controls (-0.60, 0.12) and (-0.34, 0.30) .. ( 0.00, 0.30)
        .. controls ( 0.34, 0.30) and ( 0.60, 0.12) .. ( 0.60,-0.68)
        .. controls ( 0.60,-0.74) and ( 0.56,-0.76) .. ( 0.50,-0.76)
        -- (-0.50,-0.76)
        .. controls (-0.56,-0.76) and (-0.60,-0.74) .. (-0.60,-0.68)
        -- cycle;
\end{scope}
\node[lbl] at (5.35,-1.22) {sender};

\coordinate (NC) at (14.20,0);
\begin{scope}[on background layer]
  \filldraw[fill=netbg, draw=netedge, line width=1.4pt]
           (NC) ellipse [x radius=2.66cm, y radius=2.40cm];
\end{scope}

\foreach \k/\ang in {1/150, 2/90, 3/30, 4/-30, 5/-90, 6/-150}
  {\node[minimum width=1.50cm, minimum height=0.56cm, inner sep=0pt]
        (r\k) at ($(NC)+(\ang:1.88cm)$) {};}

\foreach \a/\b in {1/2,1/3,1/4,1/5,1/6,2/3,2/4,2/5,2/6,3/4,3/5,3/6,4/5,4/6,5/6}
  {\draw[meshline] (r\a) -- (r\b);}

\carglyph{r1}{statezero}{0}
\carglyph{r2}{stateone}{1}
\carglyph{r3}{stateone}{1}
\carglyph{r4}{stateone}{1}
\carglyph{r5}{statezero}{0}
\carglyph{r6}{stateone}{1}

\draw[flow, draw=onearrow,  line width=1.2pt] (6.08, 0.34) to[bend left=11]
      node[rate, text=onearrow, above, pos=0.55] {$\tfrac{s}{n}$} (11.67, 0.78);
\draw[flow, draw=zeroarrow, line width=1.2pt] (6.08,-0.34) to[bend right=11]
      node[rate, text=zeroarrow, below, pos=0.55] {$\tfrac{c}{n}$} (11.67,-0.78);
\end{tikzpicture}
}
\vspace{-0.35cm}
\caption{A binary CTMC source drives a strategic sender that pushes updates to each of the $n=6$ receivers at rate $\tfrac{s}{n}$ in state~$1$ and at rate $\tfrac{c}{n}$ in state~$0$, under the budget $s+c\le R$. Every pair of receivers is connected, and each ordered pair gossips at rate $\tfrac{\lambda}{n-1}$ and keeps the fresher packet. The digit on each vehicle is the state stored at that receiver's current packet.}
\label{fig:system_model}\vspace{-0.3cm}
\end{figure}

\subsection{Related Work}\label{sec:related}
\vspace{-0.1cm}
 
In large decentralized networks, dissemination often takes place through gossiping, where nodes exchange information peer-to-peer without centralized coordination \cite{boyd2005gossip,shah2008gossip}. Within the AoI literature, timely gossip has been studied over a variety of network models and topologies \cite{yates2021age_gossip,buyukates2022version,srivastava2023age,kaswan2025age_survey}. More recently, the literature has examined how misinformation spreads through gossip networks \cite{kaswan2025misinformation,maranzatto2025information}, as well as how state accuracy evolves in timeliness-based gossiping systems driven by Markov sources \cite{tekez2026accuracy}. Despite these advances, the existing gossip literature is largely non-strategic:
the source-side update process is typically exogenous, and the sender is not
modeled as an agent whose objectives may differ from the receivers'. 
 
A complementary line of work comes from strategic information transmission (SIT) and Bayesian persuasion. Strategic communication without commitment is studied through cheap-talk equilibria \cite{crawford1982strategic}, whereas Bayesian persuasion studies a sender that commits to an information structure \cite{kamenica2011bayesian}. This literature has been extended in several dynamic directions, including continuous-time and Markovian persuasion models \cite{ely2017beeps,ashkenazi2023markovian,aid2025continuous,lehrer2025markovian}. Persuasion over networks has also attracted growing interest, with prior work examining how network structure and public signaling shape the impact of strategic information \cite{egorov2020persuasion,candogan2019persuasion}. Unlike these works, in this manuscript we study strategic information transmission in a dynamic timeliness-based gossip network with Markov source evolution. Closest to our setting is \cite{gundogan2025timely}, which considers persuasion
through the timing of truthful disclosures from a binary CTMC source, covering
both the single sender--single receiver case and a single sender broadcasting to
multiple receivers, together with multi-source extensions. In that work the receivers never exchange packets, so there is no network, and each receiver declares the content of its latest received packet. The present model places the declaration rule in a network where packets also arrive through gossip and the receivers choose a network-wide gossip rate. In addition, \cite{tekez2026accuracy} considers a non-strategic sender whose updates are generated at a single rate independent of the source state, whereas we introduce strategic behavior into that gossip system.
\subsection{Contributions}\label{sec:contributions}
\vspace{-0.1cm}

Our contributions are as follows:  $i$) We prove that, at any optimal sender policy with $s^\star\!>\!0$ and $c^\star\!>\!0$, the sender's budget constraint binds. Hence, at a fixed gossip rate, the sender's problem reduces to a one-dimensional optimization over the budget line $s\!+\!c\!=\!R$. $ii$) We prove that in the strategic regime $s \!> \!c$, the receiver utility is strictly increasing and the sender utility is strictly decreasing in the gossip rate $\lambda$, for every $n\!\ge\! 2$. The receivers therefore gossip at the highest available rate. $iii$) Finally, we show that $(c\!-\!s)\tfrac{\partial \US}{\partial\lambda}\!\ge\! 0$ for all parameter regimes. Gossip therefore has no direction of its own, and it moves the network away from the state that the sender pushes harder rather than reinforcing the asymmetry.

A preliminary version of this work~\cite{tekez2026strategic} introduced the model, the utilities, and the Stackelberg formulation, proved gossip monotonicity only for $n=2$, and characterized the equilibrium by invoking a conjecture. Every formal result stated here is either new or strengthens its preliminary counterpart. Four results have no counterpart in the preliminary version: \cref{lem:strict-interior} establishes strict interior bounds on the defect vectors, \cref{thm:budget-binding} proves budget binding for the full two-dimensional sender problem, \cref{cor:direct-sign} settles the sender side in the remaining push-rate regimes, and \cref{cor:cmin-lambda-strategic} separates the direct and policy-adjustment effects of gossip at the equilibrium. For the strengthened results, \cref{thm:UR-monotone} proves gossip monotonicity for every $n\ge 2$, which removes the conjecture from the strategic regime, \cref{cor:full-opt} drops the interval assumption on the feasible set, and \cref{thm:SE} adds uniqueness of the equilibrium together with its closed form. \cref{sec:shs} gives the SHS derivation in full, and \cref{sec:numerics} validates the analytical curves against Monte Carlo simulations.

\section{System Model and Problem Formulation}\label{sec:model}
\vspace{-0.1cm}
In this section, we describe the network model, define the sender and receiver utility functions, and formulate the Stackelberg game between the sender and the receiver (follower) team.
 
\subsection{Network Model}\label{sec:network-model}
\vspace{-0.1cm}

We consider a fully connected (FC) gossip network with $n \ge 2$ nodes, and a sender that observes and disseminates status updates from a time-varying source, as depicted in Fig.~\ref{fig:system_model}. The information at the source evolves according to a binary CTMC with states $Q(t) \in \{0,1\}$ and a generator matrix
\begin{align}
    \mathbf{Q}=\begin{bmatrix} -q_{01} & q_{01} \\[2pt] q_{10} & -q_{10} \end{bmatrix},
\label{eq:q_binary_structure}
\end{align}
where $q_{01}$ is the transition rate from state~$0$ to state~$1$, and $q_{10}$ is the transition rate from state~$1$ to state~$0$. We take $q_{01}>0$ and $q_{10}>0$, so the chain is irreducible. The stationary distribution of the binary CTMC is given by
\begin{align}
    \pi_0=\frac{q_{10}}{q_{01}+q_{10}},
    \;\;
    \pi_1=\frac{q_{01}}{q_{01}+q_{10}}\,\cdot
\label{eq:binary_stationary_pi}
\end{align}
We define $\rho \triangleq q_{01}\pi_0 = q_{10}\pi_1 = \frac{q_{01}q_{10}}{q_{01}+q_{10}}$ as the balanced steady-state flux.

Unlike standard monitoring models where the source update process is exogenous, in this work the sender is \emph{strategic} and controls the rate at which it transmits updates in a \emph{state-dependent} manner. Specifically, the sender transmits version-stamped updates to each node $i \in \mathcal{N}$, where $\mathcal{N} = \{1,\ldots,n\}$ denotes the set of all nodes in the network, according to conditionally independent Poisson processes whose rates depend on the current source state. When $Q(t)=0$, the sender pushes updates to each node at rate $\frac{c}{n}$, and when $Q(t)=1$, it pushes updates to each node at rate $\frac{s}{n}$. The aggregate rate at which the sender transmits is therefore $c$ in state~$0$ and $s$ in state~$1$. Here, $c \ge 0$ and $s \ge 0$ are the sender's design parameters, and we refer to $\theta = (s,c)$ as the \emph{sender policy}. When the state of the CTMC changes, a new version of the update is generated at the source, and we denote the most recent version of the information at the source at time $t$ as $V_0(t)$. Similarly, we denote the most recent version of the update at node $i$ at time $t$ as $V_i(t)$ and its content as $S_i(t) \in \{0,1\}$.
 
When the sender transmits an update to node $i$ at time $t$, node $i$ receives both the current CTMC state and its version information in the form of $\{V_0(t), Q(t)\}$. Since the sender always holds the most recent and accurate information, node $i$ always accepts updates from the sender, \emph{should it decide to opt in}. Thus, upon receiving an update from the sender, node $i$ updates its information as $V_i(t) = V_0(t)$ and $S_i(t) = Q(t)$.
 
In order to measure information freshness at the nodes, we use the version age of information (VAoI) introduced in \cite{yates2021age_gossip, Abolhassani2021}. The version age of node $i$ at time $t$ is denoted by $X_i(t)$ and is given by $X_i(t) \!=\! V_0(t)\!-\!V_i(t)$. In addition to receiving direct updates from the sender, the nodes also share their locally stored packets through \emph{gossiping} in order to improve their information freshness \cite{yates2021age_gossip}. Each ordered pair $(i,j)$ of distinct nodes gossips according to a Poisson process with rate $\frac{\lambda}{n-1}$, where $\lambda \ge 0$ is the per-node gossip rate.\footnote{We consider directed gossip, where only the receiving node's state can change.} When two nodes gossip with one another, they only consider the version age of the incoming packet, as they do not know whether the incoming packet is accurate or not. More formally, when node $i$ sends a packet to node $j$, the version age at node $j$ evolves according to the rule
\begin{align}
X_j(t) = \min\{X_i(t^-), X_j(t^-) \},
\label{eq:x_j_piecewise}
\end{align}
where $t^-$ represents the time instant just prior to the information exchange.

\subsection{Information Accuracy}\label{sec:accuracy}
\vspace{-0.1cm}

Since both sender and receiver utilities will be defined through average node accuracies, we define the \emph{accuracy indicator} for node $i$, denoted $C_i(t)$, as $C_i(t) \!=\! \ind_{\{S_i(t)=Q(t)\}}$, where {$\ind_{\{\cdot\}}$} is the indicator function taking value $1$ only when the event in its argument is true and value $0$ otherwise. We note that the accuracy definition depends only on the content of the source and the relevant node, and does not depend on the version age of the node. Thus, a node can be accurate in two ways, either by holding the most recent source update or by holding a stale packet that is accurate because the source has since returned to the state encoded in that packet.

When node $i$ sends its status update to node $j$, the information accuracy at node $j$ changes according to
\begin{align}
C_j(t) =
\begin{cases}
    C_i(t^-), &\text{if } X_i(t^-) \le X_j(t^-), \\
    C_j(t^-), & \text{if } X_i(t^-) >  X_j(t^-).
\end{cases}
\label{eq:c_j_piecewise}
\end{align}
As a consequence, a node can accept a fresher packet that is actually \emph{less} accurate than its current information, if the source has changed state after the incoming packet was generated.

\subsection{Utility Structure}\label{sec:utility}
\vspace{-0.1cm}

We now define the instantaneous utilities. Each node $j$ declares an estimate $\hat{Q}_j(t)\in\{0,1\}$ of the source state, equal to the content $S_j(t)$ of its most recent packet when it follows the sender. A node therefore declares the content of its freshest stored packet rather than a posterior formed from packet ages, arrival times, and the absence of arrivals; hence, estimators that filter the observed history lie outside the strategy class considered here. Each receiver node $j$ obtains an instantaneous utility given by
\begin{equation}\label{eq:uj-inst}
u_j(t) = q\,\ind_{\{\hat{Q}_j(t)=0,\, Q(t)=0\}} + (1-q)\,\ind_{\{\hat{Q}_j(t)=1,\, Q(t)=1\}}
\end{equation}
for a given importance weight $q \in (0,1)$. 

The sender's instantaneous utility is
\begin{equation}\label{eq:us-inst}
u_s(t) = \frac{1}{n}\sum_{j=1}^{n}\ind_{\{\hat{Q}_j(t)=1\}},
\end{equation}
reflecting the sender's preference for the receivers to hold state-$1$ information regardless of the true source state.

To express the long-run averages of these instantaneous utilities in terms of steady-state quantities, we introduce the mode-tagged mean accuracy components
\begin{equation}\label{eq:cacc-def}
c_{\mathrm{acc}}^{(m)}(\theta,\lambda) = \limt \E\bigl[C_j(t)\,\ind_{\{Q(t)=m\}}\bigr],
\end{equation}
for $m \in \{0,1\}$, where $\theta = (s,c)$. By the symmetry of the fully connected network, all nodes achieve identical steady-state accuracies \cite{tekez2026accuracy}; hence,  $c_{\mathrm{acc}}^{(m)}$ does not depend on the node index $j$.

The receivers' long-run average utility follows directly from \eqref{eq:uj-inst} and \eqref{eq:cacc-def}. Noting that $\ind_{\{\hat{Q}_j=0,\,Q=0\}} = C_j\,\ind_{\{Q=0\}}$ and $\ind_{\{\hat{Q}_j=1,\,Q=1\}} = C_j\,\ind_{\{Q=1\}}$, we obtain
\begin{equation}\label{eq:UR}
U_R(\theta,\lambda) \!=\! \limt \E[u_j(t)] \!=\! q\, c_{\mathrm{acc}}^{(0)}(\theta,\lambda) \!+ \!(1\!-\!q)\, c_{\mathrm{acc}}^{(1)}(\theta,\lambda).\!
\end{equation}

For the sender, we express the long-run average of \eqref{eq:us-inst} by decomposing $\limt \Pr(\hat{Q}_j(t)=1)$ over the two source modes. Conditioning on the source mode gives $\Pr(\hat{Q}_j\!=\!1 \mid Q\!=\!0) = 1 - \E[C_j \mid Q\!=\!0]$ and $\Pr(\hat{Q}_j\!=\!1 \mid Q\!=\!1) = \E[C_j \mid Q\!=\!1]$. Weighting these by $\pi_0$ and $\pi_1$ and using $c_{\mathrm{acc}}^{(m)} = \pi_m\,\limt \E[C_j(t) \mid Q(t)=m]$ gives $\limt \Pr(\hat{Q}_j(t)=1) = \pi_0 - c_{\mathrm{acc}}^{(0)} + c_{\mathrm{acc}}^{(1)}$. Averaging over the nodes, the sender's long-run average utility is
\begin{equation}\label{eq:US}
U_S(\theta,\lambda) = \limt \E[u_s(t)] = \pi_0 - c_{\mathrm{acc}}^{(0)}(\theta,\lambda) + c_{\mathrm{acc}}^{(1)}(\theta,\lambda).
\end{equation}

\subsection{Stackelberg Game Formulation}\label{sec:stackelberg}
\vspace{-0.1cm}

We first state a standing assumption that settles the receiver's default strategy.

\begin{assumption}
The parameters satisfy $q\pi_0 > (1-q)\pi_1$, or equivalently $\Delta := q\,q_{10} - (1-q)\,q_{01} > 0$.\label{as:outside}
\end{assumption}

Under this assumption, the receiver's optimal constant estimate in the absence of any sender information is $\hat{Q}_j(t) = 0$, yielding outside-option utility $U_{\mathrm{out}} = q\pi_0$, and without updates the receivers never declare state~$1$, which gives the sender zero utility. Assumption~\ref{as:outside} restricts attention to the nontrivial case in which the receivers prefer the constant estimate state~$0$, whereas the sender prefers state~$1$. If the receivers instead preferred the constant estimate state~$1$, the default strategy would already give the sender its maximum utility, leaving no scope for strategic persuasion. Following the sender requires the receivers to operate the gossip protocol and to process the incoming packets, which costs them a fixed amount $\eta>0$ per unit time. The receiver team therefore collects $\UR(\theta,\lambda)\!-\!\eta$ when it follows and $q\pi_0$ when it does not, so the participation constraint (PC) requires $\UR(\theta,\lambda) \ge q\pi_0 + \eta$. Since $c_{\mathrm{acc}}^{(m)}\!\le\!\pi_m$ by~\eqref{eq:cacc-def}, we have $\UR\!\le\! q\pi_0\!+\!(1\!-\!q)\pi_1$, so no policy meets the PC unless $\eta\le(1-q)\pi_1$. \cref{lem:strict-interior} will place $\UR$ strictly below that bound at every finite policy, so the PC requires $\eta<(1-q)\pi_1$. The sender must therefore provide sufficient information quality to ensure receivers' participation, while limiting state-$0$ updates that act against its interest.

The interaction is formulated as a Stackelberg game, in which the sender acts as the leader and the receiver team as the follower. The sender first commits to a policy $\theta = (s,c) \in \Theta = \{(s,c) \in \reals_+^2 : 0 < s + c \leq R\}$, where $R$ represents the sender budget.\footnote{The origin is excluded because no source updates arrive when $s=c=0$, and the steady-state packet contents depend on the initial contents. The receivers then use the default estimate $\hat{Q}_j(t)=0$, which gives $U_R=q\pi_0$ and $U_S=0$.} The average rate at which the sender transmits under this budget is $\pi_0 c+\pi_1 s$ by~\eqref{eq:binary_stationary_pi}. Then, the receiver team observes the sender's committed policy and selects its best response. At this point, the receiver team has two options. If following the sender at some gossip rate $\lambda \in [0,\Lambda]$ meets the PC, the receiver team adopts $\sigma_{\mathrm{follow}}(\lambda)$, under which it follows the sender's messages and gossips at rate $\lambda$, and the sender collects $\US(\theta,\lambda)$ of~\eqref{eq:US}. Otherwise it adopts $\sigma_{\mathrm{default}}$, under which it disregards the sender's messages and holds its estimate at state~$0$ at all times, obtaining $U_{\mathrm{out}} = q\pi_0$ while~\eqref{eq:us-inst} gives $\US(\theta,\sigma_{\mathrm{default}})=0$.

All receiver nodes attain the same utility $\UR(\theta,\lambda)$, and $\lambda$ is common to the network; hence, maximizing $\UR$ maximizes the objective of the receiver team. Let $U_R^\star(\theta)\triangleq\max_{\lambda\in[0,\Lambda]}\UR(\theta,\lambda)$, and let $\lambda^\star(\theta)$ be a receiver-optimal gossip rate. We define the optimistic Stackelberg equilibrium as the sender policy $\theta^\star$ satisfying
\begin{align} \label{eq:st_formulation}
 \US(\theta^\star, \mathrm{BR}(\theta^\star)) \ge \US(\theta, \mathrm{BR}(\theta)),
\end{align}
for all $\theta \in \Theta$, where the best response of the receiver team is given by
\begin{equation}\label{eq:BR}
\mathrm{BR}(\theta) = \begin{cases}
\sigma_{\mathrm{follow}}(\lambda^\star(\theta)), & \text{if } U_R^\star(\theta) \ge q\pi_0 + \eta, \\
\sigma_{\mathrm{default}}, & \text{otherwise}.
\end{cases}
\end{equation}
The equilibrium is optimistic because two ties are resolved in favor of the sender, the one at $U_R^\star(\theta)\!=\!q\pi_0\!+\!\eta$ by following in~\eqref{eq:BR}, and the one among the maximizers of $\UR(\theta,\cdot)$ by the choice of $\lambda^\star(\theta)$.
\section{The SHS Model and the Accuracy Recursion}\label{sec:shs}
\vspace{-0.1cm}

Both utilities in~\eqref{eq:UR}--\eqref{eq:US} depend on the steady-state mode-tagged accuracies $c_{\mathrm{acc}}^{(0)}$ and $c_{\mathrm{acc}}^{(1)}$, which are shaped jointly by the source transitions, the sender's state-dependent pushes, and the gossip process, all operating concurrently in continuous time. We compute them with the stochastic hybrid systems (SHS) framework~\cite{hespanha2006modelling}, which derives moment equations for piecewise-deterministic Markov processes by applying a generator identity to suitably chosen test functions. The construction introduces the \emph{freshest-node accuracy} vector $\mathbf{f}_k$, which tracks the accuracy of the freshest node within any $k$-node subset, tagged by the source state.

\subsection{The SHS Model}
\vspace{-0.1cm}

We follow the accuracy recursion framework in~\cite{tekez2026accuracy}, with the source-push rate replaced by the state-dependent sender policy.

Similar to \cite{tekez2026accuracy}, we define the hybrid state as
\begin{equation}
\bigl(Q(t),\mathbf Z(t)\bigr)
= \bigl(Q(t),\mathbf C(t),\mathbf X(t)\bigr),
\label{eq:hybrid_state}
\end{equation}
where $Q(t)\in\{0,1\}$ is the discrete source mode, $\mathbf C(t)=[C_1(t),\ldots,C_n(t)]\in\{0,1\}^n$ is the accuracy vector, and $\mathbf X(t)=[X_1(t),\ldots,X_n(t)]\in \mathbb Z_{\ge 0}^n$ is the version-age vector. Between jumps, $\mathbf Z(t)$ is constant, i.e., $\dot{\mathbf{Z}}(t)=0$. With $\bar m=1-m$, the set of all transitions for this SHS is given by
\begin{align}
\mathcal E
\!&=\! \{(0,0,m\to\bar m)\!\mid\! m\!\in\!\{0,1\}\}\cup \{(i,j)\mid i,j\!\in\!\mathcal N,\! i\neq j\}\nonumber \\
&\;\cup \{(0,m\to j)\mid m\in\{0,1\},\ j\in\mathcal N\}.
\label{eq:set_of_transitions}
\end{align}

The transition $(0,0,m\to\bar m)$ is a source-mode change (a source CTMC state flip), $(0,m\to j)$ is a source push to node $j$ while the source CTMC is in mode $m$, and $(i,j)$ denotes a directed gossip event from node $i$ to node $j$. Each transition $e\in\mathcal E$ resets the hybrid state through the reset map
\begin{equation}
\bigl(Q(t^+),\mathbf Z(t^+)\bigr)
= \phi_e\bigl(Q(t^-),\mathbf Z(t^-)\bigr).
\label{eq:q_z_reset}
\end{equation}
The transition rates for the events defined in~\eqref{eq:set_of_transitions} are
\begin{equation}
\lambda_e(Q)
=
\begin{cases}
q_{m\bar m}\,\mathds{1}_{\{Q=m\}},
& e=(0,0,m\to\bar m),\\[3pt]
\dfrac{c}{n}\,\mathds{1}_{\{Q=0\}},
& e=(0,0\to j),\\[7pt]
\dfrac{s}{n}\,\mathds{1}_{\{Q=1\}},
& e=(0,1\to j),\\[7pt]
\dfrac{\lambda}{n-1},
& e=(i,j).
\end{cases}
\label{eq:lambda_e_def}
\end{equation}
Thus, the sender policy enters the SHS only through the mode-dependent source-push rates, while gossip remains mode independent.

\subsection{Reset Maps}\label{sec:reset_maps}
\vspace{-0.1cm}

The node-level update rules in~\eqref{eq:x_j_piecewise} and~\eqref{eq:c_j_piecewise} induce the set-level reset maps needed for the SHS recursion. For a subset $A_k\subseteq\mathcal N$ with cardinality $|A_k|=k$, we define the \emph{version age of set} $A_k$ as the version age of its freshest node,
\begin{align}
    X_{A_k}(t) \triangleq \min_{j \in A_k} X_j(t),
    \label{eq:version_age_subset}
\end{align}
which determines the minimum version age of the packet that the set $A_k$ holds at time $t$. We define the freshness-based accuracy of $A_k$ as the accuracy indicator of the freshest node of $A_k$,
\begin{equation}
F_{A_k}(\mathbf C,\mathbf X)
\triangleq
C_{\argmin_{\ell\in A_k} X_\ell},
\label{eq:f_ak_defn}
\end{equation}
with arbitrary tie breaking. The version counter increases at every source transition, so two nodes holding the same version hold the same content and the same accuracy indicator. The tie breaking therefore leaves~\eqref{eq:f_ak_defn} unchanged. We will use this test function of the hybrid state to characterize the average accuracy terms that make up the utilities defined in the earlier section. The reset map for the set-level version age is given by
\begin{align}
X'_{A_k}
&=
\begin{cases}
X_{A_k}+1,
& e=(0,0,m\to\bar m),\\
0,
& e=(0,m\to j),\ j\in A_k,\\
X_{A_{k+1}},
& e=(i,j),\ i\in\mathcal N\setminus A_k,\ j\in A_k,\\
X_{A_k},
& \text{otherwise.}
\end{cases}
\label{eq:X_Ak_update}
\end{align}
Similarly, we provide the reset map for the freshness-based accuracy as
\begin{align}
F'_{A_k}
&=
\begin{cases}
1-F_{A_k},
& e=(0,0,m\to\bar m),\\
1,
& e=(0,m\to j),\ j\in A_k,\\
F_{A_{k+1}},
& e=(i,j),\ i\in\mathcal N\setminus A_k,\ j\in A_k,\\
F_{A_k},
& \text{otherwise.}
\end{cases}
\label{eq:F_Ak_update}
\end{align}
A source-mode change increments the version age of every node, and therefore of every subset, by one, and it flips the binary accuracy indicators. A source push to a node in $A_k$ resets that node’s version age to zero and makes its stored packet accurate. An inbound gossip event can change the freshest packet of the set, so the expanded set $A_k\cup\{i\}$ enters, which network symmetry lets us write as $A_{k+1}$. See~\cite{tekez2026accuracy} for a detailed discussion of these reset maps.

We apply the steady-state SHS balance identity to these set quantities to derive the steady-state values for the mode-tagged freshest-node accuracies. For any time-invariant test function $\psi$ of the hybrid state, we obtain from stationarity
\begin{equation}
\sum_{e\in\mathcal E}
\mathbb E\!\left[
\lambda_e(Q)
\left(
\psi\bigl(\phi_e(Q,\mathbf Z)\bigr)
- \psi(Q,\mathbf Z)
\right)
\right]
=0.
\label{eq:shs_balance_equation}
\end{equation}
We use~\eqref{eq:shs_balance_equation} with mode-tagged versions of $F_{A_k}$ that will correspond to the time invariant test functions mentioned above. This yields a backward recursion for the freshest-node accuracy, which we provide in the next subsection.

\subsection{Accuracy Recursion}\label{sec:fk_recursion}
\vspace{-0.1cm}
In this subsection, we derive a backward recursion for the freshest-node accuracy and connect it to the per-node accuracies that appear in the utilities. To this end, we define the mode-tagged freshest-accuracy vector as
\begin{equation}
\mathbf f_k
=
\left[
\begin{smallmatrix}
f_k^{(0)}\\
f_k^{(1)}
\end{smallmatrix}
\right],
\;\;
f_k^{(m)}
=
\limt
\mathbb E\!\left[
F_{A_k}(t)\,\mathds{1}_{\{Q(t)=m\}}
\right].
\label{eq:fk_vec_def}
\end{equation}
For a single-node set $A_1=\{j\}$ the freshest node is node $j$ itself, so $F_{A_1}(t)=C_j(t)$. Hence, comparing~\eqref{eq:fk_vec_def} with~\eqref{eq:cacc-def} gives $f_1^{(m)}\!=\!c_{\mathrm{acc}}^{(m)}$ for $m\!\in\!\{0,1\}$, so the per-node average accuracies are equal to the $k=1$ member of the $f_k$ family. For $k>1$ the vector $\mathbf f_k$ tracks the freshest node of a $k$-set rather than a fixed node, so the identity holds only at $k=1$ \cite[Remark~1]{tekez2026accuracy}. The utilities in~\eqref{eq:UR} and~\eqref{eq:US} therefore can be written as
\begin{equation}\label{eq:util-f1}
\UR=q f_1^{(0)}+(1-q) f_1^{(1)},\;\;
\US=\pi_0-f_1^{(0)}+f_1^{(1)},
\end{equation}
and we work with $\mathbf f_1$ from here on.

For $k=1,\ldots,n$, we define
\begin{equation}
\alpha_k^{(0)}
\triangleq
\frac{kc}{n},
\;\;
\alpha_k^{(1)}
\triangleq
\frac{ks}{n},
\;\;
\alpha_k
\triangleq
\frac{k(n-k)\lambda}{n-1}.
\label{eq:alpha_def}
\end{equation}
The first two terms denote the total source-push rates into a set $A_k$ in modes $0$ and $1$, while $\alpha_k$ is the total rate of gossip events from outsider nodes, $i \in \mathcal N\setminus A_k$, into insider nodes, $j \in A_k$. Then, we define
\begin{equation}
\!\!\mathbf{W}_k
\!\triangleq\!
\left[
\begin{smallmatrix}
q_{01}+\alpha_k^{(0)}+\alpha_k & q_{10}\\
q_{01} & q_{10}+\alpha_k^{(1)}+\alpha_k
\end{smallmatrix}
\right],
\!\!\;
\mathbf v_k\!\!
\triangleq\!\!
\left[
\begin{smallmatrix}
q_{10}\pi_1+\alpha_k^{(0)}\pi_0\\
q_{01}\pi_0+\alpha_k^{(1)}\pi_1
\end{smallmatrix}
\right].\!\!
\label{eq:Wk_vk_def}
\end{equation}
We now state the finite-dimensional SHS recursion used in the remainder of the manuscript.

\begin{theorem}\label{thm:backward_recursion}
Let $n\ge 2$, $q_{01}>0$, $q_{10}>0$, and $s+c>0$. For $k=1,\ldots,n-1$, the mode-tagged freshest-accuracy vectors satisfy
\begin{equation}
\mathbf{W}_k\mathbf f_k
=
\mathbf v_k+\alpha_k\mathbf f_{k+1},
\label{eq:fk_recursion}
\end{equation}
with the boundary condition given by $\mathbf{W}_n\mathbf f_n=\mathbf v_n.$ Here, $\alpha_n^{(0)}=c$, $\alpha_n^{(1)}=s$, and $\alpha_n=0$, which removes the level-$(n+1)$ term from this recursion and from every recursion derived from it, so no quantity above level $n$ is used. Writing $D_n=q_{01}s+cq_{10}+cs$, we equivalently have
\begin{equation}
\mathbf f_n
=
\left[
\begin{smallmatrix}
\dfrac{c\,\pi_0(q_{10}+s)}{D_n}\\[5pt]
\dfrac{s\,\pi_1(q_{01}+c)}{D_n}
\end{smallmatrix}
\right].
\label{eq:fn_closed}
\end{equation}
Each $\mathbf f_k$ is uniquely determined by backward induction in~\eqref{eq:fk_recursion}.
\end{theorem}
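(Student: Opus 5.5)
Apply the stationary SHS balance identity~\eqref{eq:shs_balance_equation} with the mode-tagged test function $\psi_k^{(m)}(Q,\mathbf Z)=F_{A_k}(\mathbf C,\mathbf X)\,\ind_{\{Q=m\}}$ for a fixed $k$-set $A_k$ and each $m\in\{0,1\}$. Then evaluate the expected jump $\lambda_e(Q)\bigl(\psi(\phi_e(Q,\mathbf Z))-\psi(Q,\mathbf Z)\bigr)$ class by class, using the reset maps~\eqref{eq:F_Ak_update}. Before doing so, verify that a stationary distribution exists and that the limits in~\eqref{eq:fk_vec_def} are well defined. The pair $(Q,\mathbf C)$ together with the ordering of the nodes' versions evolves as a finite-state CTMC once ages are capped in relative terms. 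Since $s+c>0$ and $q_{01},q_{10}>0$, every node is eventually reset by a push, so this chain is irreducible on its recurrent class, which justifies~\eqref{eq:shs_balance_equation}.

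\textbf{Source flips.} For $m=0$, a flip $0\to1$ at rate $q_{01}$ removes $F_{A_k}\ind_{\{Q=0\}}$ and contributes $-q_{01}f_k^{(0)}$. A flip $1\to0$ at rate $q_{10}$ produces $(1-F_{A_k})\ind_{\{Q=0\}}$ after the jump, and this contributes $q_{10}(\pi_1-f_k^{(1)})$. Here I use $\lim\E[\ind_{\{Q=1\}}]=\pi_1$ from~\eqref{eq:binary_stationary_pi}. \textbf{Pushes.} A push in mode $0$ to any of the $k$ nodes of $A_k$ has total rate $kc/n=\alpha_k^{(0)}$ and sets $F=1$, contributing $\alpha_k^{(0)}(\pi_0-f_k^{(0)})$. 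Pushes outside $A_k$ change nothing, and pushes in mode $1$ do not affect the mode-$0$ tagged function. \textbf{Gossip.} There are $k(n-k)$ ordered pairs from outside $A_k$ into $A_k$, each at rate $\lambda/(n-1)$, for a total of $\alpha_k$. Each such event replaces $F_{A_k}$ by $F_{A_k\cup\{i\}}$. By the exchangeability of the fully connected network, the stationary expectation of the latter equals $f_{k+1}^{(0)}$, so this class contributes $\alpha_k(f_{k+1}^{(0)}-f_k^{(0)})$. Gossip within $A_k$ or within its complement leaves $F_{A_k}$ unchanged, since the freshest node of $A_k$ keeps its version-minimal packet.

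Summing the three contributions to zero gives $(q_{01}+\alpha_k^{(0)}+\alpha_k)f_k^{(0)}+q_{10}f_k^{(1)}=q_{10}\pi_1+\alpha_k^{(0)}\pi_0+\alpha_kf_{k+1}^{(0)}$, which is the first row of~\eqref{eq:fk_recursion}. The $m=1$ case is symmetric and gives the second row. At $k=n$ there are no outsiders, so $\alpha_n=0$ and the equation reduces to $\mathbf W_n\mathbf f_n=\mathbf v_n$.

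For the closed form and uniqueness, compute $\det\mathbf W_k=(q_{01}+\alpha_k^{(0)}+\alpha_k)(q_{10}+\alpha_k^{(1)}+\alpha_k)-q_{01}q_{10}$. This quantity is strictly positive whenever $\alpha_k^{(0)}+\alpha_k^{(1)}+\alpha_k>0$, which holds because $s+c>0$. Hence each $\mathbf W_k$ is invertible, and backward substitution starting from $\mathbf f_n=\mathbf W_n^{-1}\mathbf v_n$ determines every $\mathbf f_k$ uniquely. At $k=n$, $\det\mathbf W_n=q_{01}s+cq_{10}+cs=D_n$. Applying Cramer's rule with $\mathbf v_n=[\rho+c\pi_0,\ \rho+s\pi_1]^\top$ and the identity $\rho=q_{10}\pi_1=q_{01}\pi_0$ gives the first component
\[
\frac{(q_{10}+s)(\rho+c\pi_0)-q_{10}(\rho+s\pi_1)}{D_n}=\frac{c\pi_0(q_{10}+s)}{D_n},
\]
where the simplification uses $s\rho=sq_{10}\pi_1$. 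The second component follows the same way, yielding~\eqref{eq:fn_closed}.

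The main obstacle is the gossip step. There I have to argue carefully that $\E[F_{A_k\cup\{i\}}\ind_{\{Q=m\}}]$ depends only on the set size, which follows from the symmetry of the fully connected network as in~\cite{tekez2026accuracy}. I also have to confirm that tie-breaking in~\eqref{eq:f_ak_defn} does not affect the jump, which is handled by the observation that equal versions imply equal content.
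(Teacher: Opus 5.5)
Your proposal is correct and takes the same approach as the paper. You apply the balance identity~\eqref{eq:shs_balance_equation} to $F_{A_k}\ind_{\{Q=m\}}$ with the reset map~\eqref{eq:F_Ak_update}, which yields exactly the balance equations~\eqref{eq:F_balance0}--\eqref{eq:F_balance1}, and your determinant-positivity and Cramer's-rule computations match the paper's. Your extra remarks on the existence of the stationary limits and on tie-breaking go beyond what the paper spells out without changing the argument; so does the gossip class from $A_k$ to outsiders, which you did not list but which is trivially inert.
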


\begin{proof}
We apply the balance identity~\eqref{eq:shs_balance_equation} to $F_{A_k}\mathds{1}_{\{Q=0\}}$ and $F_{A_k}\mathds{1}_{\{Q=1\}}$, alongside the reset map~\eqref{eq:F_Ak_update}. This yields the following two balance equations:
\begin{align}
 \!q_{10}\!\left(\!\pi_1\!-\!\!f_k^{(1)}\!\right)\! -\! q_{01}f_k^{(0)}\! \!+ \!\!\alpha_k^{(0)}\!\left(\!\pi_0\!-\!f_k^{(0)}\!\right)\! + \!\alpha_k\left(\!f_{k+1}^{(0)}\!-\!f_k^{(0)}\!\right)\!\!=& 0,
\label{eq:F_balance0}\\
 \!q_{01}\!\left(\!\pi_0\!-\!f_k^{(0)}\!\right) \!\!- \!q_{10}f_k^{(1)} \!\!+\! \alpha_k^{(1)}\!\left(\!\pi_1\!-\!f_k^{(1)}\!\right) \!+ \!\alpha_k\!\left(\!f_{k+1}^{(1)}\!-\!f_k^{(1)}\!\right)\!\!=& 0.
\label{eq:F_balance1}
\end{align}
Rearranging~\eqref{eq:F_balance0} and~\eqref{eq:F_balance1} gives~\eqref{eq:fk_recursion}. When $k=n$, there are no nodes outside $A_n$, so $\alpha_n=0$. Solving the resulting two-dimensional linear system gives~\eqref{eq:fn_closed}, where we use the fact $q_{01}\pi_0=q_{10}\pi_1$ for simplification.

Expanding the determinant of $\mathbf{W}_k$ in~\eqref{eq:Wk_vk_def} gives $\!\det \mathbf{W}_k \!=\! q_{01}(\alpha_k^{(1)}\!+\!\alpha_k) \!+\! q_{10}(\alpha_k^{(0)}\!+\!\alpha_k) \!+\! (\alpha_k^{(0)}\!+\!\alpha_k)(\alpha_k^{(1)}\!\!+\!\alpha_k)$. Since $\alpha_k^{(0)}\!+\!\alpha_k^{(1)}\!>\!0$ for every $k$ when $c\!+\!s\!>\!0$, we have $\!\det \!\mathbf{W}_k\!>\!0$ whenever $q_{01},q_{10}\!>\!0$ and $c\!+\!s\!>\!0$. Hence each $\mathbf{W}_k$ is nonsingular, and the recursion determines $\!\mathbf f_k\!$ uniquely from $\mathbf f_{k+1}$.
\end{proof}

Since $F_{A_k}\in\{0,1\}$, the entry $f_k^{(m)}$ in~\eqref{eq:fk_vec_def} is the joint probability that the source is in mode $m$ and the freshest packet of $A_k$ is accurate. Its complement, which we call the \emph{defect}, appears throughout the rest of the manuscript, with
\begin{equation}\label{eq:defect-def}
\hvec_k=\bmat{h_k^{(0)}\\ h_k^{(1)}},\;
h_k^{(m)}\!=\!\pi_m\!-\!f_k^{(m)}\!=\!\Pr\bigl(\!F_{A_k}\!\!=\!0,Q\!=\!m\bigr).
\end{equation}
The following lemma keeps every component of $\fvec_k$ and $\hvec_k$ away from both of its extremes when $s,c>0$.

\begin{lemma}\label{lem:strict-interior}
Let $n\ge 2$, $q_{01}>0$, $q_{10}>0$, $s,c>0$, and $\lambda\ge 0$. For every $k=1,\ldots,n$, we have
\begin{equation}\label{eq:hk-strict}
\mathbf 0<\fvec_k<\bmat{\pi_0\\ \pi_1},\;\;\text{equivalently}\;\;
\mathbf 0<\hvec_k<\bmat{\pi_0\\ \pi_1},
\end{equation}
where the inequalities are componentwise.
\end{lemma}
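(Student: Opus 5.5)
The plan is to run backward induction on $k$ through the recursion of \cref{thm:backward_recursion}. Working with $\fvec_k$ directly is awkward, because $\mathbf W_k$ has positive off-diagonal entries and so is not an M-matrix. I will therefore change coordinates: I pair one accuracy component with the opposite-mode defect, so that the linear system at each level has an entrywise positive inverse. Concretely, I set $\mathbf x_k=(f_k^{(0)},h_k^{(1)})^\top$ and $\mathbf y_k=(h_k^{(0)},f_k^{(1)})^\top$. Strict positivity of both vectors at every $k$ gives all four inequalities in \eqref{eq:hk-strict}. The reason is that $h_k^{(m)}>0$ is equivalent to $f_k^{(m)}<\pi_m$, and $f_k^{(m)}>0$ is equivalent to $h_k^{(m)}<\pi_m$.

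\textbf{Step 1 (rewriting the recursion).} I substitute $f_k^{(1)}=\pi_1-h_k^{(1)}$ into \eqref{eq:F_balance0}--\eqref{eq:F_balance1}, or equivalently into \eqref{eq:fk_recursion}, and simplify with $q_{01}\pi_0=q_{10}\pi_1$. Write $a_k=q_{01}+\alpha_k^{(0)}+\alpha_k$ and $b_k=q_{10}+\alpha_k^{(1)}+\alpha_k$. I expect the system to become
\begin{equation*}
\mathbf M_k\mathbf x_k=\bmat{\alpha_k^{(0)}\pi_0+\alpha_k f_{k+1}^{(0)}\\ \alpha_k h_{k+1}^{(1)}},\qquad
\mathbf M_k=\bmat{a_k & -q_{10}\\ -q_{01} & b_k}.
\end{equation*}
In the second row, the $q_{10}\pi_1$ and $\alpha_k\pi_1$ constants cancel against $q_{01}\pi_0$ and the $\pi_1$ part of $\alpha_k f_{k+1}^{(1)}$.

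By the symmetric substitution $f_k^{(0)}=\pi_0-h_k^{(0)}$, I expect
\begin{equation*}
\mathbf M_k\mathbf y_k=\bmat{\alpha_k h_{k+1}^{(0)}\\ \alpha_k^{(1)}\pi_1+\alpha_k f_{k+1}^{(1)}},
\end{equation*}
with the same matrix $\mathbf M_k$. At $k=n$ we have $\alpha_n=0$, so the $(k+1)$-terms vanish.

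\textbf{Step 2 (positive inverse).} We have $\det\mathbf M_k=a_kb_k-q_{01}q_{10}$, which equals $\det\mathbf W_k$ and is therefore positive, as computed in the proof of \cref{thm:backward_recursion}. It follows that
\begin{equation*}
\mathbf M_k^{-1}=\frac{1}{\det\mathbf M_k}\bmat{b_k & q_{10}\\ q_{01} & a_k},
\end{equation*}
and this inverse is entrywise strictly positive because $q_{01},q_{10}>0$. Consequently any right-hand side that is nonnegative and not identically zero produces a strictly positive solution.

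\textbf{Step 3 (induction).} Assume, as the inductive hypothesis, that $\mathbf x_{k+1}\ge\mathbf 0$ and $\mathbf y_{k+1}\ge\mathbf 0$; for the base case $k=n$ no hypothesis is needed. The right-hand side for $\mathbf x_k$ is then nonnegative, and its first entry is at least $\alpha_k^{(0)}\pi_0=kc\pi_0/n>0$ because $c>0$. Hence $\mathbf x_k>\mathbf 0$. In the same way, $s>0$ makes the second entry of the right-hand side for $\mathbf y_k$ at least $ks\pi_1/n>0$, so $\mathbf y_k>\mathbf 0$. This closes the induction for all $k=n,\dots,1$, and it holds for every $\lambda\ge0$, including $\lambda=0$.

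As a sanity check, the base case should agree with \eqref{eq:fn_closed}. From that formula, $f_n^{(0)}>0$ is visible directly, and $h_n^{(0)}=\pi_0 q_{01}(q_{10}+s)/D_n>0$ after simplification.

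I expect the main obstacle to be Step 1, namely finding the right mixed coordinates and checking that all constant terms cancel via $q_{01}\pi_0=q_{10}\pi_1$. A wrong pairing would leave a sign-indefinite constant on the right-hand side. Once that cancellation is verified, the positivity of $\mathbf M_k^{-1}$ and the induction are routine.
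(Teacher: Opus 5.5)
Your proof is correct, but it follows a different route from the paper's.

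\textbf{The paper's route.} The paper gets the weak bounds $0\le f_k^{(m)}\le\pi_m$ directly from the probabilistic meaning $f_k^{(m)}=\Pr(F_{A_k}=1,Q=m)$. It checks level $k=n$ from the closed form \eqref{eq:fn_closed}. Then, at each $k<n$, it rules out each of the four equalities by contradiction from a single row of \eqref{eq:fk_recursion}. Positivity comes first, because the upper-bound contradictions use $f_k^{(1)}>0$ and $f_k^{(0)}>0$. No induction is needed there, since the weak bounds at level $k+1$ are already known from probability.

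\textbf{Your route.} Your change of variables amounts to conjugating $\mathbf W_k$ by the signature matrix $\diag(1,-1)$, after an affine shift. This turns $\mathbf W_k$ into a nonsingular M-matrix with entrywise positive inverse, which is the same mechanism that underlies \cref{lem:sign}. The cancellations via $q_{01}\pi_0=q_{10}\pi_1$ that you flagged as the main risk do go through exactly as you wrote them. As a result, $\mathbf x_k$ and $\mathbf y_k$ each satisfy a closed backward recursion driven only by $\mathbf x_{k+1}$ and $\mathbf y_{k+1}$, respectively. Each forcing term is nonnegative, and the push term alone makes it nonzero.

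\textbf{Comparison.} Your approach derives all four strict inequalities from the linear recursion alone, for its unique solution, with no appeal to the probabilistic interpretation of $\fvec_k$. It also covers $\lambda=0$ without any special treatment. The paper's argument is shorter, but it relies on that interpretation for the weak bounds.

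\textbf{A minor slip.} In your sanity check, $h_n^{(0)}=\pi_0-c\pi_0(q_{10}+s)/D_n=\pi_0q_{01}s/D_n$, not $\pi_0q_{01}(q_{10}+s)/D_n$. Your own system gives the correct value, $h_n^{(0)}=q_{10}\pi_1s/D_n$. The slip does not affect the argument.
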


\begin{proof}
Since $F_{A_k}\in\{0,1\}$, the definition in~\eqref{eq:fk_vec_def} gives $f_k^{(m)} =\Pr\bigl(F_{A_k}=1,Q=m\bigr),$ and hence $0\le f_k^{(m)}\le \Pr(Q=m)=\pi_m.$ It therefore remains to rule out equality at either endpoint. At the boundary level $k=n$, both numerators and the denominator $D_n=q_{01}s+cq_{10}+cs$ in~\eqref{eq:fn_closed} are strictly positive. Thus,
$\fvec_n>\mathbf 0$. Moreover, we have $D_n-c(q_{10}+s)=q_{01}s>0$ and $D_n-s(q_{01}+c)=cq_{10}>0.$ It follows from~\eqref{eq:fn_closed} that
$f_n^{(0)}<\pi_0$ and $f_n^{(1)}<\pi_1$.

Now fix $k\in\{1,\ldots,n-1\}$. Writing the two rows of
\eqref{eq:fk_recursion}, with $\mathbf W_k$ and $\mathbf v_k$ defined
in~\eqref{eq:Wk_vk_def}, gives
\begin{align}
\bigl(\!q_{01}\!+\!\alpha_k^{(0)}\!\!+\!\alpha_k\!\bigr)f_k^{(0)}\!\!+\!q_{10}f_k^{(1)}\!\!&=\!q_{10}\pi_1\!+\!\alpha_k^{(0)}\pi_0\!+\!\alpha_k f_{k+1}^{(0)},\!\!\label{eq:fk-row0}\\
q_{01}f_k^{(0)}\!\!+\!\bigl(\!q_{10}\!+\!\alpha_k^{(1)}\!\!+\!\alpha_k\!\bigr)f_k^{(1)}\!\!&=\!q_{01}\pi_0\!+\!\alpha_k^{(1)}\pi_1\!+\!\alpha_k f_{k+1}^{(1)},\!\!\label{eq:fk-row1}
\end{align}
The probabilistic definition of $f_{k+1}^{(m)}$ gives
$0\le f_{k+1}^{(m)}\le\pi_m$.

We first establish strict positivity. Suppose that $f_k^{(0)}=0$.
Then~\eqref{eq:fk-row0} gives $q_{10}f_k^{(1)}
=
q_{10}\pi_1+\alpha_k^{(0)}\pi_0
+\alpha_k f_{k+1}^{(0)}
>
q_{10}\pi_1,$ where the strict inequality follows from
$\alpha_k^{(0)}=\frac{kc}{n}>0$. Hence,
$f_k^{(1)}>\pi_1$, contradicting the upper bound
$f_k^{(1)}\le\pi_1$. Similarly, suppose that $f_k^{(1)}=0$. Then~\eqref{eq:fk-row1} gives $q_{01}f_k^{(0)}
=q_{01}\pi_0+\alpha_k^{(1)}\pi_1
+\alpha_k f_{k+1}^{(1)}>q_{01}\pi_0,$ because $\alpha_k^{(1)}=\frac{ks}{n}>0$. Hence,
$f_k^{(0)}>\pi_0$, contradicting
$f_k^{(0)}\le\pi_0$. Therefore, we have
$\fvec_k>\mathbf 0$.

We next establish the strict upper bounds. Suppose that
$f_k^{(0)}=\pi_0$. Substituting this equality into
\eqref{eq:fk-row0} and using
$q_{01}\pi_0=q_{10}\pi_1$ yields $q_{10}f_k^{(1)}=\alpha_k\bigl(f_{k+1}^{(0)}-\pi_0\bigr)=-\alpha_k h_{k+1}^{(0)}\le 0,$
where the inequality follows from $\alpha_k\ge0$ and
$h_{k+1}^{(0)}\ge0$. This contradicts the already established
inequality $f_k^{(1)}>0$. Similarly, suppose that $f_k^{(1)}=\pi_1$. Substituting this equality
into~\eqref{eq:fk-row1} and again using
$q_{01}\pi_0=q_{10}\pi_1$ gives $q_{01}f_k^{(0)}
=
\alpha_k\bigl(f_{k+1}^{(1)}-\pi_1\bigr)
=
-\alpha_k h_{k+1}^{(1)}
\le 0,$ which contradicts $f_k^{(0)}>0$. Therefore, we have $\mathbf 0<\fvec_k<\bmat{\pi_0\\ \pi_1}$ for every $k=1,\ldots,n$. Finally, since $\hvec_k=\bmat{\pi_0\\ \pi_1}-\fvec_k$ by~\eqref{eq:defect-def}, the preceding inequalities are equivalent to $\mathbf 0<\hvec_k<\bmat{\pi_0\\ \pi_1}.$
\end{proof}

Intuitively, the packet that is freshest in $A_k$ records the source state at its generation time. It is accurate if the source has undergone an even number of state transitions since then and inaccurate if it has undergone an odd number. Because $q_{01},q_{10},s,$ and $c$ are strictly positive, both types of histories occur with positive steady-state probability for each current source mode. Equation~\eqref{eq:hk-strict} formalizes this strict nondegeneracy. The steady state therefore follows from a single backward pass. We compute $\mathbf f_n$ from~\eqref{eq:fn_closed} and apply~\eqref{eq:fk_recursion} down to $k=1$, where the subset is a single node and~\eqref{eq:alpha_def} gives $\alpha_1^{(0)}=\tfrac{c}{n}$, $\alpha_1^{(1)}=\tfrac{s}{n}$, and $\alpha_1=\lambda$. Substituting $\mathbf f_1$ into~\eqref{eq:util-f1} yields the receiver and sender utilities for any sender policy $(s,c)$ and gossip rate $\lambda$.

Evaluating both utilities at a given $(\theta,\lambda)$ costs $n$ solves of a $2\times 2$ system, so the whole steady state is obtained in $\mathcal{O}(n)$ operations, even though the accuracy vector $\mathbf C(t)$ ranges over $2^n$ configurations. This keeps each utility evaluation inexpensive in the scalar search developed in \cref{sec:equilibrium}.

\subsection{Sign-Preservation Property}\label{sec:sign-lemma}
\vspace{-0.1cm}

The $2 \times 2$ matrices $\mathbf{W}_k$ in the backward
recursion~\eqref{eq:fk_recursion} share a structure that preserves
component-wise sign patterns under inversion.

\begin{lemma}\label{lem:sign}
Let $\mathbf{M} = \bigl[\begin{smallmatrix} m_{11} & m_{12} \\ m_{21} & m_{22}
\end{smallmatrix}\bigr]$ with $m_{11},m_{12},m_{21},m_{22} > 0$ and $\det(\mathbf{M}) > 0$.
If\, $\mathbf{y} = [y_1\;\; y_2]^\top$ satisfies $y_1 \ge 0$ and
$y_2 \le 0$, then $\mathbf{x} = \mathbf{M}^{-1}\mathbf{y}$ satisfies
$x_1 \ge 0$ and $x_2 \le 0$. Similarly, if $\mathbf{y} = [y_1\;\; y_2]^\top$ satisfies $y_1\leq 0$ and $y_2\geq 0$, then $x_1 \le 0$ and $x_2 \ge 0$ are satisfied under the same assumptions.
\end{lemma}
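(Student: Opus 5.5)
We will prove \cref{lem:sign} by writing the inverse explicitly and reading off the signs. Since $\det(\mathbf M)>0$, the matrix is invertible and
\begin{equation*}
\mathbf M^{-1}=\frac{1}{\det(\mathbf M)}\bmat{m_{22} & -m_{12}\\ -m_{21} & m_{11}},
\end{equation*}
so that
\begin{equation*}
x_1=\frac{m_{22}y_1-m_{12}y_2}{\det(\mathbf M)},\qquad
x_2=\frac{-m_{21}y_1+m_{11}y_2}{\det(\mathbf M)}.
\end{equation*}

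For the first claim, assume $y_1\ge0$ and $y_2\le0$. Then $m_{22}y_1\ge0$ and $-m_{12}y_2\ge0$, because all entries are positive. Hence the numerator of $x_1$ is nonnegative. Dividing by the positive determinant gives $x_1\ge0$. In the same way, $-m_{21}y_1\le0$ and $m_{11}y_2\le0$, so $x_2\le0$.

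The second claim follows by applying the first to $-\mathbf y$. This vector satisfies $-y_1\ge0$ and $-y_2\le0$, and $\mathbf M^{-1}(-\mathbf y)=-\mathbf x$. The first claim then gives $-x_1\ge0$ and $-x_2\le0$, that is, $x_1\le0$ and $x_2\ge0$.

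Nothing here is a real obstacle. The only point that needs care is that the positivity of the off-diagonal entries, together with the sign of the determinant, is exactly what makes the cofactor terms align in sign. This is also what makes the lemma applicable to $\mathbf W_k$ in~\eqref{eq:Wk_vk_def}. Its entries are all positive whenever $q_{01},q_{10}>0$, and $\det\mathbf W_k>0$ by the proof of \cref{thm:backward_recursion}.
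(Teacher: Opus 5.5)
Your proof is correct and follows essentially the same route as the paper's. Both write $\mathbf{x}=\mathbf{M}^{-1}\mathbf{y}$ through the adjugate formula, sign the two numerators over the positive determinant, and get the reversed-sign case by applying the first case to $-\mathbf{y}$.
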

\begin{proof}
Since $\det(\mathbf{M}) > 0$, the components of $\mathbf{x} = \mathbf{M}^{-1}\mathbf{y}$ are
$x_1 = \tfrac{m_{22}\,y_1 - m_{12}\,y_2}{\det(\mathbf{M})}$ and
$x_2 = \tfrac{m_{11}\,y_2 - m_{21}\,y_1}{\det(\mathbf{M})}$.
When $y_1 \ge 0$ and $y_2 \le 0$, the numerator of $x_1$ is non-negative and the numerator of $x_2$ is non-positive. The reversed-sign case follows by applying the same argument to $-\mathbf{y}$.
\end{proof}
Every $\mathbf{W}_k$ in~\eqref{eq:Wk_vk_def} has positive entries and, by the proof of \cref{thm:backward_recursion}, positive determinant, so \cref{lem:sign} applies to $\mathbf{W}_k^{-1}$ at every level and drives the inductions of \cref{sec:sender-opt,sec:monotone}. We next use these closed-form expressions to optimize the sender’s rate allocation between the two source states.
\section{Sender Optimality and Budget Binding}\label{sec:sender-opt}
\vspace{-0.1cm}

In this section, we first show that the sender utility is monotone in the push rates, and then prove that the budget constraint binds at any interior optimum.

\subsection{Sensitivity Analysis and Monotonicity of the Sender Utility}\label{sec:sender-mono}
\vspace{-0.1cm}

We begin by analyzing the \emph{sensitivity} of the recursion to the sender push rates $s$ and $c$. Let $\mathbf e_1=[1\;\;0]^\top$ and
$\mathbf e_2=[0\;\;1]^\top$. Differentiating the mode-tagged freshest-accuracy vectors $\mathbf f_k$ given in \eqref{eq:fk_recursion} with respect to $c$ and to $s$, denoted by $\mathbf f_{k,c}$ and $\mathbf f_{k,s}$, respectively, gives
\begin{align}
\mathbf{W}_k\,\fvec_{k,c} &= \alpha_k\,\fvec_{k+1,c} + \tfrac{k}{n}h_k^{(0)}\,\eone, \label{eq:fk-c-sens}\\
\mathbf{W}_k\,\fvec_{k,s} &= \alpha_k\,\fvec_{k+1,s} + \tfrac{k}{n}h_k^{(1)}\,\etwo, \label{eq:fk-s-sens}
\end{align}
for $k=1,\ldots,n-1$, with the boundary equations $\mathbf{W}_n\fvec_{n,c}=h_n^{(0)}\eone$ and $\mathbf{W}_n\fvec_{n,s}=h_n^{(1)}\etwo$. In terms of~\eqref{eq:defect-def}, the forcing terms of the two recursions are $\tfrac{k}{n}h_k^{(0)}\eone$ and $\tfrac{k}{n}h_k^{(1)}\etwo$, with sign patterns $(+,0)$ and $(0,+)$ by~\eqref{eq:hk-strict}. The remaining term carries the sign pattern of level $k+1$, which is absent at $k=n$. Starting there and working down, \cref{lem:sign} gives the following signs by backward induction.

\begin{prop}\label{prop:signs-mono}
For $s, c > 0$, $\lambda \ge 0$, and every $k=1,\ldots,n$,
\begin{equation}\label{eq:acc-signs}
\frac{\partial f_k^{(0)}}{\partial c} > 0, \;
\frac{\partial f_k^{(1)}}{\partial c} < 0, \;
\frac{\partial f_k^{(0)}}{\partial s} < 0, \;
\frac{\partial f_k^{(1)}}{\partial s} > 0.
\end{equation}
Consequently, we have
\begin{equation}\label{eq:US-mono}
\frac{\partial \US}{\partial c} < 0, \;\; \frac{\partial \US}{\partial s} > 0.
\end{equation}
\end{prop}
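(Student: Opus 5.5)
The plan is to differentiate the backward recursion of \cref{thm:backward_recursion} and then run a backward induction on $k$, using the strict version of \cref{lem:sign} at each level.

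\emph{Differentiability and the sensitivity recursions.} Each $\mathbf W_k$ and $\mathbf v_k$ is affine in $(s,c)$, and $\det\mathbf W_k>0$ for $s+c>0$. Hence every $\fvec_k$ is a rational, and therefore smooth, function of $(s,c)$ on the open set $s,c>0$. Differentiating $\mathbf W_k\fvec_k=\mathbf v_k+\alpha_k\fvec_{k+1}$ in $c$ uses
\[
\partial_c\mathbf W_k=\tfrac{k}{n}\eone\eone^\top,\qquad \partial_c\mathbf v_k=\tfrac{k}{n}\pi_0\eone,
\]
and $\alpha_k$ does not depend on $c$. This gives
\[
\mathbf W_k\fvec_{k,c}=\alpha_k\fvec_{k+1,c}+\tfrac{k}{n}\bigl(\pi_0-f_k^{(0)}\bigr)\eone=\alpha_k\fvec_{k+1,c}+\tfrac{k}{n}h_k^{(0)}\eone,
\]
which is \eqref{eq:fk-c-sens}. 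The $s$-derivative gives \eqref{eq:fk-s-sens} in the same way. At $k=n$ we have $\alpha_n=0$ and $\alpha_n^{(0)}=c=\tfrac{n}{n}c$, which gives the stated boundary equations.

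\emph{Backward induction for the $c$-derivatives.} The claim at each level is $f_{k,c}^{(0)}>0$ and $f_{k,c}^{(1)}<0$. Only the non-strict conclusion is available from \cref{lem:sign}, so I will use its explicit formulas to get strictness. With $\mathbf M=\mathbf W_k$ (positive entries, positive determinant), write $\mathbf y$ for the right-hand side of \eqref{eq:fk-c-sens}. The solution components are
\[
x_1=\frac{m_{22}y_1-m_{12}y_2}{\det\mathbf M},\qquad x_2=\frac{m_{11}y_2-m_{21}y_1}{\det\mathbf M}.
\]
If $y_1>0$ and $y_2\le0$, then $x_1>0$ and $x_2<0$ strictly, because $m_{22}y_1>0$, $m_{21}y_1>0$, and the remaining terms have the correct sign.

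At $k=n$, $\mathbf y=h_n^{(0)}\eone$, and $h_n^{(0)}>0$ by \cref{lem:strict-interior}, so $y_1>0=y_2$. For $k<n$, assume the claim holds at $k+1$. Then
\[
y_1=\alpha_k f_{k+1,c}^{(0)}+\tfrac{k}{n}h_k^{(0)}>0,
\]
since $\alpha_k\ge0$, the induction hypothesis holds, and $h_k^{(0)}>0$ by \cref{lem:strict-interior}. Also $y_2=\alpha_k f_{k+1,c}^{(1)}\le0$. This closes the induction and covers $\lambda=0$, where $\alpha_k=0$.

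\emph{The $s$-derivatives.} The argument is symmetric. The forcing term is $\tfrac{k}{n}h_k^{(1)}\etwo$, which has $y_1\le0$ and $y_2>0$. The same formulas give $x_1<0$ because of the term $-m_{12}y_2$, and $x_2>0$ because of the term $m_{11}y_2$.

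\emph{Sender utility.} From \eqref{eq:util-f1}, with $\pi_0$ independent of $(s,c)$,
\[
\partial_c\US=-f_{1,c}^{(0)}+f_{1,c}^{(1)}<0,\qquad \partial_s\US=-f_{1,s}^{(0)}+f_{1,s}^{(1)}>0.
\]
Both follow directly from \eqref{eq:acc-signs} at $k=1$.

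The main obstacle I expect is strictness, since \cref{lem:sign} alone yields only weak signs. This is exactly where \cref{lem:strict-interior} is needed: its strict positivity of the defects $h_k^{(m)}$ makes the forcing term strictly nonzero at every level, including $k=n$. That in turn makes both solution components strictly signed.
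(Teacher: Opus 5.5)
Your proposal is correct and follows the paper's proof essentially step for step. You differentiate the recursion to get \eqref{eq:fk-c-sens}--\eqref{eq:fk-s-sens}, seed the induction at $k=n$ with the strictly positive defects from \cref{lem:strict-interior}, and push the sign patterns $(+,-)$ and $(-,+)$ down through $\mathbf{W}_k^{-1}$. Your use of the explicit inverse formula to get strict inequalities at every level is a useful tightening. \Cref{lem:sign} as stated gives only weak signs, and the paper's inductive step passes over this point.
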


\begin{proof}
At the boundary level $k=n$ we have $\alpha_n=0$, so~\eqref{eq:fk-c-sens} reduces to $\mathbf{W}_n\fvec_{n,c}=h_n^{(0)}\eone$, whose right-hand side is nonzero by~\eqref{eq:hk-strict}. Since $\mathbf{W}_n^{-1}$ has positive diagonal entries and negative off-diagonal entries, the boundary equation implies that $f_{n,c}^{(0)}>0$ and $f_{n,c}^{(1)}<0$. For the induction step, suppose that $\fvec_{k+1,c}$ has the sign pattern $(+,-)$. If $\lambda>0$, then $\alpha_k\fvec_{k+1,c}$ has the same sign pattern, whereas this term is zero if $\lambda=0$. The remaining term $\tfrac{k}{n}h_k^{(0)}\eone$ has the sign pattern $(+,0)$. Therefore, the right-hand side of~\eqref{eq:fk-c-sens} has a strictly positive first component and a nonpositive second component.
Lemma~\ref{lem:sign} then preserves the pattern, and backward induction to $k=1$ gives the first two inequalities of \eqref{eq:acc-signs}. The $s$-sensitivity argument is symmetric, with the forcing along $\etwo$ producing the sign pattern $(-,+)$.

By \eqref{eq:util-f1}, we have $\US=\pi_0-f_1^{(0)}+f_1^{(1)}$. Therefore, applying the sign relations in \eqref{eq:acc-signs} with $k=1$ yields
\begin{equation*}
\frac{\partial \US}{\partial c}
=-f_{1,c}^{(0)}+f_{1,c}^{(1)}<0,
\;\;
\frac{\partial \US}{\partial s}
=-f_{1,s}^{(0)}+f_{1,s}^{(1)}>0,
\end{equation*}
establishing the monotonicity properties stated in \eqref{eq:US-mono}.
\end{proof}

Intuitively, increasing $c$ injects more state-$0$ packets, which raises the mode-$0$ accuracy. Those same packets keep circulating after the source has moved to state~$1$, which lowers the mode-$1$ accuracy. Increasing $s$ has the mirror effect.

\subsection{Budget Binding}\label{sec:budget-binding}
\vspace{-0.1cm}

In Proposition~\ref{prop:signs-mono}, we show that $\frac{\partial \US}{\partial s} {}> 0$, suggesting that the sender should exhaust its budget. However, because $\frac{\partial \UR}{\partial s} $ does not have a fixed sign, increasing $s$ alone may violate the PC. A two-dimensional argument is therefore needed to show that the budget constraint binds at every interior optimum. We introduce the following compact notation for the accuracy sensitivities,
\begin{equation}\label{eq:abpr}
f_{1,c}^{(0)}\! :=\! \frac{\partial f_1^{(0)}}{\partial c}, 
 \bar{f}_{1,c}^{(1)} \!:= \!-\frac{\partial f_1^{(1)}}{\partial c}, 
\bar{f}_{1,s}^{(0)} \! :=\! -\frac{\partial f_1^{(0)}}{\partial s}, 
f_{1,s}^{(1)}\! :=\! \frac{\partial f_1^{(1)}}{\partial s},
\end{equation}
all of which are strictly positive by Proposition~\ref{prop:signs-mono}. Writing the gradients in the $(s,c)$ coordinate order, the utility gradients take the form
\begin{align}
\nabla \US \!&=\! \left(\frac{\partial \US}{\partial s},\frac{\partial \US}{\partial c} \right) = \bigl(\bar{f}_{1,s}^{(0)}\!+\!f_{1,s}^{(1)},\; -( f_{1,c}^{(0)}\!+\!\bar{f}_{1,c}^{(1)})\bigr),\!\! \label{eq:grad-US}\\
\nabla \UR \!&=\!\left(\frac{\partial \UR}{\partial s},\frac{\partial \UR}{\partial c} \right)\!\! \nonumber \\ &=\!\! \bigl(-q\bar{f}_{1,s}^{(0)}\!+\!(1\!-\!q)f_{1,s}^{(1)},\; q f_{1,c}^{(0)}\!-\!(1\!-\!q)\bar{f}_{1,c}^{(1)}\bigr).\!\! \label{eq:grad-UR}
\end{align}

The argument relies on the positivity of the determinant of the accuracy-sensitivity matrix $\mathbf{J}_k \triangleq \bigl[\,\fvec_{k,c}\ \ \fvec_{k,s}\,\bigr]$, whose columns collect the derivatives of $\mathbf f_k$ with respect to the two push rates. At $k=1$, the sensitivities in~\eqref{eq:abpr} identify its columns as $\fvec_{1,c} = [\,f_{1,c}^{(0)}\ \ {-\bar{f}_{1,c}^{(1)}}\,]^\top$ and $\fvec_{1,s} = [\,{-\bar{f}_{1,s}^{(0)}}\ \ f_{1,s}^{(1)}\,]^\top$, so $\det \mathbf{J}_1 = f_{1,c}^{(0)} f_{1,s}^{(1)} - \bar{f}_{1,c}^{(1)}\,\bar{f}_{1,s}^{(0)}$. A positive determinant means the two push rates change the accuracy vector in different directions, which is what lets the sender trade one against the other. The recursion forces this determinant to be positive, which is established as the first step of the proof below.

\begin{theorem}\label{thm:budget-binding} Let $\lambda \!\ge\! 0$, $R\! >\! 0$, and $\eta \!> \!0$ be fixed. Let $(s^\star, c^\star)$ be an optimal PC-feasible sender policy at this fixed gossip rate $\lambda$, and suppose that $s^\star > 0$ and $c^\star > 0$. Then, we have $s^\star + c^\star = R$. \end{theorem}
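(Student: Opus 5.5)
The plan is to argue by contradiction. Suppose $(s^\star,c^\star)$ is optimal and PC-feasible with $s^\star,c^\star>0$ but $s^\star+c^\star<R$. Then $(s^\star,c^\star)$ lies in the interior of $\Theta$ relative to the constraints $s>0$, $c>0$, $s+c\le R$, so every sufficiently small perturbation $(ds,dc)$, of either sign, stays in $\Theta$. By the recursion of \cref{thm:backward_recursion}, $\mathbf f_1$ is a rational function of $(s,c)$ with nonvanishing denominators, because each $\det\mathbf W_k>0$. Hence $\US$ and $\UR$ are $C^1$ near the point. It suffices to exhibit a direction $d$ with $\nabla\US\cdot d>0$ and $\nabla\UR\cdot d>0$. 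A small step along $d$ then strictly raises $\US$ while keeping $\UR\ge q\pi_0+\eta$, which contradicts optimality.

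\textbf{Step 1 (easy case).} If the PC is slack at the optimum, or if $\partial\UR/\partial s\ge 0$, increase $s$ alone. By \eqref{eq:US-mono}, $\US$ strictly increases. Either the PC survives by continuity, or $\UR$ does not decrease to first order; in the latter case I use $d=(1,-\epsilon)$, which has $\nabla\UR\cdot d>0$ if $\partial\UR/\partial s>0$. The borderline $\partial\UR/\partial s=0$ is absorbed into Step 2.

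\textbf{Step 2 (general reduction).} By Gordan's alternative, a direction $d$ with both inner products strictly positive exists unless $\nabla\UR=-\kappa\nabla\US$ for some $\kappa\ge 0$. Note that $\nabla\US\neq 0$ by \eqref{eq:grad-US}. Using \eqref{eq:grad-US}--\eqref{eq:grad-UR} with $a=\bar f^{(0)}_{1,s}$, $b=f^{(1)}_{1,s}$, $\alpha=f^{(0)}_{1,c}$, and $\beta=\bar f^{(1)}_{1,c}$, all positive, the two components of this anti-parallel condition read
\[
(q-\kappa)a=(1-q+\kappa)b,\qquad (q-\kappa)\alpha=(1-q+\kappa)\beta .
\]
Since $1-q+\kappa>0$ and $b,\beta>0$, both sides are nonzero. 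Dividing the two identities gives $a/b=\alpha/\beta$, that is, $\alpha b-\beta a=\det\mathbf J_1=0$. So it suffices to prove $\det\mathbf J_1>0$.

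\textbf{Step 3 (main obstacle: $\det\mathbf J_k>0$ by backward induction).} Stacking \eqref{eq:fk-c-sens}--\eqref{eq:fk-s-sens} gives
\[
\mathbf W_k\mathbf J_k=\alpha_k\mathbf J_{k+1}+\mathbf D_k,\qquad \mathbf D_k=\tfrac{k}{n}\diag\bigl(h_k^{(0)},h_k^{(1)}\bigr),
\]
with $\mathbf D_k$ positive diagonal by \cref{lem:strict-interior}. At $k=n$ the $\alpha_n$ term vanishes, so $\det\mathbf J_n=\det\mathbf D_n/\det\mathbf W_n>0$.

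For the inductive step, write $\mathbf J_{k+1}=\bigl[\begin{smallmatrix}j_{11}&j_{12}\\ j_{21}&j_{22}\end{smallmatrix}\bigr]$. By \cref{prop:signs-mono}, $j_{11},j_{22}>0$. A direct $2\times2$ expansion gives
\[
\det(\alpha_k\mathbf J_{k+1}+\mathbf D_k)=\alpha_k^2\det\mathbf J_{k+1}+\alpha_k\bigl(D_{11}j_{22}+D_{22}j_{11}\bigr)+D_{11}D_{22}.
\]
Every term is nonnegative and the last is strictly positive. Dividing by $\det\mathbf W_k>0$ yields $\det\mathbf J_k>0$.

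The expected difficulty lies only in organizing this step: keeping the sign pattern of $\mathbf J_{k+1}$ from \cref{prop:signs-mono} together with the inductive determinant sign. With that in hand, Step 2 rules out anti-parallel gradients, the improving direction exists, and the budget must bind.
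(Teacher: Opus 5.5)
Your proof is correct. Its Step~3 coincides with the paper's Step~1: the same identity $\mathbf W_k\mathbf J_k=\alpha_k\mathbf J_{k+1}+\tfrac{k}{n}\diag(h_k^{(0)},h_k^{(1)})$ and the same four-term determinant expansion. The difference lies entirely in how you turn $\det\mathbf J_1>0$ into an improving direction.

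The paper does this constructively. It first shows that the PC must bind, then that $\partial\UR/\partial s\le 0$. It then derives $\partial\UR/\partial c>0$ by multiplying two inequalities and contradicting $\det\mathbf J_1>0$. Finally it exhibits $\mathbf d=(\gamma,1)$ with $\gamma\in(\gamma_{\min},\gamma_{\max})$, where the interval is nonempty because $\gamma_{\max}-\gamma_{\min}$ has numerator $\det\mathbf J_1$. A separate subcase handles $\partial\UR/\partial s=0$.

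You replace all of this with Gordan's alternative. That reduces the only obstruction to $\nabla\UR=-\kappa\nabla\US$ with $\kappa\ge 0$, and hence to $\det\mathbf J_1=0$. This route is shorter and uniform, and it makes your Step~1 superfluous: a direction along which both directional derivatives are positive preserves the PC whether or not it binds. It also needs only $\det\mathbf J_1\neq 0$, not its sign. In $(c,s)$ coordinates, $\nabla\US=(-1,\,1)\,\mathbf J_1$ and $\nabla\UR=(q,\,1-q)\,\mathbf J_1$, and these row vectors are linearly independent, so a nonsingular $\mathbf J_1$ already makes the two gradients non-collinear. By contrast, the paper's multiplication step and its $\gamma$-interval both use the positivity.

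The paper's longer route buys explicit local information at a hypothetical slack-budget optimum: the PC binds, $\partial\UR/\partial s\le 0<\partial\UR/\partial c$, and there is an explicit admissible ratio of $s$- to $c$-increments. The later sections, however, rely only on the conclusion $s^\star+c^\star=R$.
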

\begin{proof}
\emph{Step 1 (Determinant positivity).} We first show that $\det \mathbf{J}_k>0$ for every $k = 1,\ldots,n$ and every $s,c>0$ and $\lambda\ge0$, where
\begin{equation}\label{eq:det-fk-pos}
\det \mathbf{J}_k
= f_{k,c}^{(0)} f_{k,s}^{(1)} - f_{k,c}^{(1)} f_{k,s}^{(0)} .
\end{equation}
The forcing magnitudes of \eqref{eq:fk-c-sens} and \eqref{eq:fk-s-sens} are $\tfrac{k}{n}h_k^{(0)}$ and $\tfrac{k}{n}h_k^{(1)}$, both positive by~\eqref{eq:hk-strict}, and \cref{prop:signs-mono} gives $f_{k+1,c}^{(0)}>0$ and $f_{k+1,s}^{(1)}>0$.

Collecting \eqref{eq:fk-c-sens} and \eqref{eq:fk-s-sens} into one matrix identity, we have
\begin{equation}\label{eq:sens-matrix}
\mathbf{W}_k \mathbf{J}_k
= \!\left[\begin{smallmatrix}
\alpha_k f_{k+1,c}^{(0)} + \tfrac{k}{n}h_k^{(0)}
 & \alpha_k f_{k+1,s}^{(0)} \\[4pt]
\alpha_k f_{k+1,c}^{(1)}
 & \alpha_k f_{k+1,s}^{(1)} + \tfrac{k}{n}h_k^{(1)}
\end{smallmatrix}\right]\!.
\end{equation}
Taking determinants on both sides of \eqref{eq:sens-matrix} and using
$\det(\mathbf{W}_k \mathbf{J}_k) \!\!= \!\det \mathbf{W}_k\det \mathbf{J}_k$, the right-hand side expands to
\begin{align}
\det \mathbf{W}_k\,\det \mathbf{J}_k
={}& \alpha_k^2\,\det \mathbf{J}_{k+1}
+ \tfrac{\alpha_k k}{n}\,h_k^{(1)} f_{k+1,c}^{(0)} \notag\\
&+ \tfrac{\alpha_k k}{n}\,h_k^{(0)} f_{k+1,s}^{(1)}
+ \tfrac{k^2}{n^2}\,h_k^{(0)} h_k^{(1)}.
\label{eq:det-recursion}
\end{align}
At the boundary level $k = n$ we have $\alpha_n = 0$ and $\tfrac{k}{n} = 1$, so only the
last term of \eqref{eq:det-recursion} survives, giving
\begin{equation}\label{eq:det-base}
\det \mathbf{W}_n\,\det \mathbf{J}_n
= h_n^{(0)} h_n^{(1)} > 0,
\end{equation}
and since $\det \mathbf{W}_n > 0$, we obtain $\det \mathbf{J}_n > 0$. For the inductive step,
assume $\det \mathbf{J}_{k+1} > 0$. On the right-hand side of
\eqref{eq:det-recursion}, the first three terms are nonnegative, since
$\alpha_k \ge 0$ and every remaining factor is positive, and the last term is
strictly positive. Hence the right-hand side is strictly positive, and
$\det \mathbf{W}_k > 0$ yields $\det \mathbf{J}_k > 0$. By backward induction,
\eqref{eq:det-fk-pos} is positive for every $k$, and at $k = 1$ this yields
\begin{equation}\label{eq:det-cond}
f_{1,c}^{(0)} f_{1,s}^{(1)} > \bar{f}_{1,c}^{(1)}\,\bar{f}_{1,s}^{(0)}.
\end{equation}

\emph{Step 2 (Budget binding).} Suppose, for the sake of contradiction, that $(s^\star,c^\star)$ is optimal, satisfying $s^\star>0$ and $c^\star>0$, but we have $s^\star+c^\star<R$. If the  participation constraint were strict, that is $\UR(s^\star,c^\star,\lambda)>q\pi_0+\eta$, then by the continuity of $\UR$ there would exist $\varepsilon>0$ such that $(s^\star+\varepsilon,c^\star)$ remains feasible. Since $\frac{\partial \US}{\partial s} = \bar{f}_{1,s}^{(0)} + f_{1,s}^{(1)}>0$ by~\eqref{eq:abpr}, increasing $s^\star$ by $\varepsilon$ would strictly increase $\US$, contradicting optimality. Therefore, the  participation constraint must hold with equality at $(s^\star,c^\star)$, that is, \begin{equation}\label{eq:IC-binds} \UR(s^\star,c^\star,\lambda)=q\pi_0+\eta. \end{equation}
We now show that $ \frac{\partial \UR(s^\star,c^\star,\lambda)}{\partial s}\le 0$. Suppose for contradiction that $ \frac{\partial \UR(s^\star,c^\star,\lambda)}{\partial s}>0$. Since $\UR$ is differentiable, we would then have $\UR(s^\star\!+\!\varepsilon,c^\star,\lambda)>\UR(s^\star,c^\star,\lambda)=q\pi_0+\eta$ for all small $\varepsilon>0$, so $(s^\star\!+\!\varepsilon,c^\star)$ remains PC-feasible. Since also $\frac{\partial\US}{\partial s}>0$, this would again contradict optimality. We next show that $\frac{\partial \UR(s^\star,c^\star,\lambda)}{\partial c}>0$. Suppose, to the contrary, that $\frac{\partial \UR(s^\star,c^\star,\lambda)}{\partial c}\le 0$. Using \eqref{eq:grad-UR}, the two inequalities $\frac{\partial \UR}{\partial c}\le 0$ and $\frac{\partial\UR}{\partial s}\le 0$ become 
\begin{align} q  f_{1,c}^{(0)} &\le (1-q)\bar{f}_{1,c}^{(1)}, \label{eq:ineq1}\\ (1-q)f_{1,s}^{(1)}&\le q\bar{f}_{1,s}^{(0)}. \label{eq:ineq2} \end{align} Multiplying \eqref{eq:ineq1} and \eqref{eq:ineq2} yields $q(1-q)\, f_{1,c}^{(0)} f_{1,s}^{(1)} \le q(1-q)\,\bar{f}_{1,c}^{(1)}\,\bar{f}_{1,s}^{(0)}$. Since $q\in(0,1)$, this reduces to $ f_{1,c}^{(0)} f_{1,s}^{(1)} \le \bar{f}_{1,c}^{(1)}\,\bar{f}_{1,s}^{(0)}$, contradicting~\eqref{eq:det-cond}. Hence, we  have  $\frac{\partial \UR(s^\star,c^\star,\lambda)}{\partial c}>0$. As a result, we have established that, at $(s^\star,c^\star)$, $\frac{\partial \UR}{\partial c} \!=\! qf_{1,c}^{(0)}\!-\!(1\!-\!q)\bar{f}_{1,c}^{(1)} \!>\!0$ and $\frac{\partial \UR}{\partial s} \!=\! - q\bar{f}_{1,s}^{(0)}\!+\!(1\!-\!q)f_{1,s}^{(1)}\!\le\! 0$. Now consider the direction $\mathbf{d}=(\gamma,1)$ in the $(s,c)$ coordinate order, with $\gamma>0$. Using~\eqref{eq:grad-US}--\eqref{eq:grad-UR}, we get $\nabla \UR\cdot \mathbf{d} \!=\! \frac{\partial \UR}{\partial c}\!+\!\gamma \frac{\partial \UR}{\partial s}$ and $\nabla \US\cdot \mathbf{d} \!=\! -( f_{1,c}^{(0)}\!+\!\bar{f}_{1,c}^{(1)})\!+\!\gamma(\bar{f}_{1,s}^{(0)}\!+\!f_{1,s}^{(1)})$. To obtain a strict sender improvement, it suffices to require \begin{equation}\label{eq:gamma-min-def} \gamma>\gamma_{\min} \triangleq \frac{ f_{1,c}^{(0)}+\bar{f}_{1,c}^{(1)}}{\bar{f}_{1,s}^{(0)}+f_{1,s}^{(1)}}. \end{equation}
If $\frac{\partial \UR}{\partial s}<0$, then the strict PC improvement is guaranteed by \begin{equation}\label{eq:gamma-max-def} \gamma<\gamma_{\max} \triangleq - \frac{\frac{\partial \UR}{\partial c}}{\frac{\partial \UR}{\partial s}} =\frac{q  f_{1,c}^{(0)}-(1-q)\bar{f}_{1,c}^{(1)}}{q\bar{f}_{1,s}^{(0)}-(1-q)f_{1,s}^{(1)}}.
\end{equation} 
If $\frac{\partial \UR}{\partial s}=0$, then $\nabla \UR\cdot \mathbf{d}=\frac{\partial \UR}{\partial c}>0$ for every $\gamma>0$, so any $\gamma>\gamma_{\min}$ yields both sender and PC improvement. It remains to show that the interval $(\gamma_{\min},\gamma_{\max})$ is nonempty when $\frac{\partial \UR}{\partial s}<0$. A direct expansion of $\gamma_{\max} - \gamma_{\min}$ gives a strictly positive denominator and the numerator in the form of
\begin{align} 
&(\!\bar{f}_{1,s}^{(0)}\!\!+\!\!f_{1,s}^{(1)}\!)\bigl[\!q f_{1,c}^{(0)}\!\!-\!\!(1\!\!-\!\!q)\bar{f}_{1,c}^{(1)}\!\bigr] \!\!-\!\!(\! f_{1,c}^{(0)}\!+\!\bar{f}_{1,c}^{(1)}\!)\bigl[\!q\bar{f}_{1,s}^{(0)}\!-\!(1\!-\!q)f_{1,s}^{(1)}\!\bigr] \nonumber\\&=\!  f_{1,c}^{(0)}f_{1,s}^{(1)}\!-\!\bar{f}_{1,c}^{(1)}\,\bar{f}_{1,s}^{(0)}.\! \label{eq:cross-mult} 
\end{align}
Since $ f_{1,c}^{(0)} f_{1,s}^{(1)}-\bar{f}_{1,c}^{(1)}\,\bar{f}_{1,s}^{(0)}>0$ by~\eqref{eq:det-cond}, we obtain $\gamma_{\min}<\gamma_{\max}$. The two bounds impose opposing requirements on the same direction. The sender needs $s$ to grow fast enough relative to $c$, which is the lower bound, while the receiver needs $s$ not to grow too fast relative to $c$, which is the upper bound. Hence, when $\frac{\partial \UR}{\partial s}<0$, we may choose any $\gamma\in(\gamma_{\min},\gamma_{\max})$. For the chosen $\gamma$, both $\nabla \US\cdot \mathbf{d}>0$ and $\nabla \UR\cdot \mathbf{d}>0$ hold. Since $s^\star+c^\star<R$ and $\gamma>0$, a sufficiently small step from $(s^\star,c^\star)$ in the direction $\mathbf{d}$ remains inside the budget-feasible set. Because the PC is binding by~\eqref{eq:IC-binds} and its directional derivative along $\mathbf{d}$ is strictly positive, a sufficiently small step is also PC-feasible. This yields a feasible perturbation that strictly increases $\US$, contradicting the optimality of $(s^\star,c^\star)$. Hence, $s^\star+c^\star<R$ leads to a contradiction, and every interior optimal point must satisfy $s^\star+c^\star=R$.
\end{proof}

The sender exhausts its budget because state-$0$ and state-$1$ pushes move the two utilities in independent directions, which leaves a direction that raises the sender utility and the receiver utility at once whenever budget is left unspent. The hypothesis $s^\star,c^\star>0$ excludes the two endpoints where the budget is allocated entirely to one source state, and \cref{sec:boundary-anchoring} shows that neither of them is PC-feasible.

\subsection{Budget-Line Parameterization and Sender Optimality}\label{sec:BL-opt}
\vspace{-0.1cm}
Having established that the budget constraint is binding, the sender's problem reduces to a one-dimensional optimization problem. Thus, by setting $s = R - c$ with $c \in [0, R]$, we define
\begin{equation}\label{eq:BL-param}
\!\!\widetilde U_S(c,\lambda)\! := \!\US(R-c, c, \lambda), \;\;
\widetilde U_R(c,\lambda) \!:=\! \UR(R-c, c, \lambda),\!\!
\end{equation}
 and let $\Slack(c, \lambda) := \widetilde U_R(c, \lambda) - (q\pi_0 + \eta)$ denote the PC slack along the budget line. The chain rule together with Proposition~\ref{prop:signs-mono} gives $\frac{\partial \widetilde U_S}{\partial c} = \bigl(\frac{\partial  \US}{\partial c} - \frac{\partial  \US}{\partial s}\bigr)\big|_{s= R-c} = -(f_{1,c}^{(0)} + \bar{f}_{1,c}^{(1)} + \bar{f}_{1,s}^{(0)} + f_{1,s}^{(1)}) < 0$, so the sender utility is strictly decreasing along the budget line. This derivative is available on $(0,R)$, where both push rates are positive, and $\widetilde U_S(\cdot,\lambda)$ is continuous on $[0,R]$, so the strict decrease extends to the closed interval.

\begin{corollary}\label{cor:full-opt}
Fix $\lambda \ge 0$, $R > 0$, and $\eta > 0$. We define
\begin{align}\label{eq:IC-set-def}
 \mathcal{F}_\lambda := \{c \in [0, R] : \Slack(c, \lambda) \ge 0\}
\end{align}
as the budget-line feasible set. Assume $\mathcal{F}_\lambda$ is nonempty and that an optimal solution $(s^\star, c^\star)$ satisfies $s^\star > 0$ and $c^\star > 0$. Then, we have $s^\star + c^\star = R$, with $c^\star = \min \mathcal{F}_\lambda$ and $s^\star = R - c^\star$.
\end{corollary}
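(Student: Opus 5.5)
The first claim, $s^\star+c^\star=R$, follows directly from \cref{thm:budget-binding}: $(s^\star,c^\star)$ is an optimal PC-feasible policy at the fixed gossip rate $\lambda$ with both push rates strictly positive, which are exactly that theorem's hypotheses. The optimum therefore lies on the budget line. Writing $s^\star=R-c^\star$ places $c^\star\in(0,R)$, and PC-feasibility of $(s^\star,c^\star)$ means $\Slack(c^\star,\lambda)\ge 0$, so $c^\star\in\mathcal F_\lambda$.

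Next I would identify $c^\star$ as the smallest feasible point. First, $\mathcal F_\lambda$ attains its minimum. Since $\widetilde U_R(\cdot,\lambda)$ is continuous on $[0,R]$, $\mathcal F_\lambda$ is the preimage of the closed set $[0,\infty)$ under the continuous map $c\mapsto\Slack(c,\lambda)$ on a compact interval. It is therefore closed and bounded, and being nonempty it has $c_{\min}:=\min\mathcal F_\lambda$. Continuity at $c\in(0,R)$ comes from the smooth dependence of $\fvec_1$ on $(s,c)$ through the nonsingular recursion of \cref{thm:backward_recursion}. The endpoints need a remark. At $c=R$ we have $s=0$ and $c>0$. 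At $c=0$ we have $s=R>0$. In both cases $s+c>0$, so $\det\mathbf W_k>0$ still holds and the backward recursion is a rational function of $(s,c)$ with nonvanishing denominators. Continuity therefore extends to the closed interval. No interval structure of $\mathcal F_\lambda$ is needed, which is the strengthening advertised over the preliminary version.

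Finally, I would use the strict decrease of $\widetilde U_S(\cdot,\lambda)$ on $[0,R]$, established just before the corollary, and argue by contradiction. Suppose $c^\star\neq c_{\min}$. Then $c_{\min}<c^\star$, so $\widetilde U_S(c_{\min},\lambda)>\widetilde U_S(c^\star,\lambda)$. The policy $(R-c_{\min},c_{\min})$ lies in $\Theta$ because its total rate is $R>0$, and it is PC-feasible because $c_{\min}\in\mathcal F_\lambda$. It would therefore strictly beat the optimum, a contradiction. Hence $c^\star=\min\mathcal F_\lambda$ and $s^\star=R-c^\star$.

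The main obstacle is the closedness of $\mathcal F_\lambda$, specifically continuity of $\Slack$ at the endpoint $c=0$ or $c=R$, where one push rate vanishes and \cref{lem:strict-interior} no longer applies. I would rely on the determinant formula for $\mathbf W_k$, which stays positive whenever $s+c>0$, to conclude that $\fvec_1$ is continuous there. One further subtlety is whether $c_{\min}$ could equal $0$. In that case the competing policy has $c=0$, which is still an admissible policy, so the contradiction argument is unaffected.
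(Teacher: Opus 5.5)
Your proposal is correct and follows the paper's own proof: budget binding from Theorem~\ref{thm:budget-binding}, closedness of $\mathcal F_\lambda$ from continuity of $\Slack(\cdot,\lambda)$ so that $\min\mathcal F_\lambda$ exists, and strict decrease of $\widetilde U_S(\cdot,\lambda)$ on $[0,R]$, which forces $c^\star=\min\mathcal F_\lambda$. Your added remark on the endpoints $c\in\{0,R\}$ only spells out a step the paper leaves implicit. There $s+c=R>0$, so $\det\mathbf W_k>0$ and $\Slack$ stays continuous.
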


\begin{proof}
Budget binding follows from Theorem~\ref{thm:budget-binding}. Since $\widetilde U_S(\cdot,\lambda)$ is strictly decreasing, the optimum is attained at the smallest feasible $c$, and continuity of $\Slack(\cdot,\lambda)$ makes $\mathcal{F}_\lambda \subset [0, R]$ closed, so that minimum exists and is uniquely optimal.
\end{proof}

We denote $c_{\min}(\lambda) := \min \mathcal{F}_\lambda$. The sender therefore chooses the smallest state-$0$ push rate that satisfies the PC. The feasible set moves with the gossip rate, so the sender's optimum is settled only once the receivers' choice of $\lambda$ is known.

\section{Receiver's Best Response and Gossip Monotonicity}\label{sec:receiver}
\vspace{-0.1cm}

Given a sender policy $\theta = (s, c)$ on the budget line, the receiver team selects a gossip rate $\lambda \in [0, \Lambda]$ to maximize its utility. The difficulty is that gossip can first hurt the receivers by spreading packets that are fresh but inaccurate, before it helps by synchronizing nodes onto the freshest available update. We first recast the recursion in a form suited to the monotonicity analysis.

\subsection{The Transformed Recursion}\label{sec:defects}
\vspace{-0.1cm}

The proof recasts the recursion in~\eqref{eq:fk_recursion} in terms of the defect vector $\hvec_k$ of~\eqref{eq:defect-def}. The recursion below holds for any $s,c>0$, and only the sign arguments of \cref{sec:monotone} use $s>c$. Substituting $\fvec_k=[\pi_0\;\;\pi_1]^\top-\hvec_k$ into~\eqref{eq:fk_recursion} and using $q_{01}\pi_0=q_{10}\pi_1=\rho$, the stationary terms cancel and we obtain
\begin{equation}\label{eq:defect-recursion}
\mathbf{W}_k\hvec_k=\alpha_k\hvec_{k+1}+\rho\one,\;\; k=1,\ldots,n-1,
\end{equation}
with the boundary equation $\mathbf{W}_n\hvec_n=\rho\one$, where $\one=[1\;\;1]^\top$.

Both utilities depend on the recursion only through $\hvec_1$. Substituting $f_1^{(m)}=\pi_m-h_1^{(m)}$ into~\eqref{eq:util-f1} gives
\begin{align}
\UR &= q\pi_0+(1-q)\pi_1-\big(qh_1^{(0)}+(1-q)h_1^{(1)}\big),\label{eq:UR-defect}\\
\US &= \pi_1+\big(h_1^{(0)}-h_1^{(1)}\big).\label{eq:US-defect}
\end{align}
The receiver utility is governed by $qh_1^{(0)}+(1-q)h_1^{(1)}$ and the sender utility is governed by $h_1^{(0)}-h_1^{(1)}$, since the remaining terms are constants. These are the only two combinations of $\hvec_1$ that the utilities see, so we take them as the rows of a change of coordinates,
\begin{equation}\label{eq:T-def}
\mathbf{T}=\bmat{q & 1-q\\ 1 & -1},\;\; \mathbf{T}^{-1}=\bmat{1 & 1-q\\ 1 & -q},
\end{equation}
and define the transformed defect $\mathbf{g}_k=\mathbf{T}\hvec_k$, with components $g_k^{(1)}=qh_k^{(0)}+(1-q)h_k^{(1)}$ and $g_k^{(2)}=h_k^{(0)}-h_k^{(1)}$. The first carries the receiver-weighted defect and the second carries the mode gap that the sender cares about, so each utility depends on one component. Differentiating~\eqref{eq:UR-defect} and~\eqref{eq:US-defect} at $k=1$ gives
\begin{equation}\label{eq:util-g}
\frac{\partial \UR}{\partial\lambda}=-\frac{\partial g_1^{(1)}}{\partial\lambda},\;\;
\frac{\partial \US}{\partial\lambda}=\frac{\partial g_1^{(2)}}{\partial\lambda},
\end{equation}
so both monotonicity claims reduce to showing that $\mathbf{g}_1$ is nonincreasing in $\lambda$ component-wise.

Applying $\mathbf{T}$ to~\eqref{eq:defect-recursion} and using $\mathbf{T}\one=[1\;\;0]^\top$ gives the transformed recursion
\begin{equation}\label{eq:transformed-recursion}
\mathbf{M}_k=\mathbf{T}\mathbf{W}_k\mathbf{T}^{-1},\;\; \mathbf{M}_k \mathbf{g}_k=\alpha_k \mathbf{g}_{k+1}+\rho\eone
\end{equation}
for $k=1,\ldots,n-1$, with the boundary equation $\mathbf{M}_n \mathbf{g}_n=\rho\eone$. Using the definition of $\mathbf{W}_k$ in~\eqref{eq:Wk_vk_def}, the conjugated matrix has the form
\begin{align}\label{eq:Mk-explicit}
\mathbf{M}_k &= \bmat{q_{01}+q_{10} & -\Delta\\ 0 & 0}+\alpha_k \mathbf{I}\notag\\
&\;+\frac{k}{n}\bmat{qc+(1-q)s & -q(1-q)(s-c)\\ -(s-c) & (1-q)c+qs}.
\end{align}
When $s>c$, and $\Delta>0$ under Assumption~\ref{as:outside}, the matrix $\mathbf{M}_k$ in~\eqref{eq:Mk-explicit} has strictly positive diagonal entries and strictly negative off-diagonal entries. Since conjugation preserves determinants, $\det \mathbf{M}_k=\det \mathbf{W}_k>0$. Reading the explicit inverse in the proof of Lemma~\ref{lem:sign} off this sign pattern, the four entries of $\mathbf{M}_k^{-1}$ are the two diagonal entries of $\mathbf{M}_k$ and the negatives of its two off-diagonal entries, each divided by $\det \mathbf{M}_k$, so
\begin{equation}\label{eq:Mk-inv-nonneg}
\mathbf{M}_k^{-1}> 0\;\;\text{component-wise},\;\; k=1,\ldots,n.
\end{equation}
The regime condition $s>c$ is used in \cref{sec:monotone} only for the two off-diagonal entries of the last term of~\eqref{eq:Mk-explicit}, which make $\mathbf{M}_k$ a nonsingular M-matrix and give~\eqref{eq:Mk-inv-nonneg}. Once~\eqref{eq:Mk-inv-nonneg} holds, the two inductions that follow use nothing else about the parameters.

\subsection{Gossip Monotonicity in the Strategic Regime}\label{sec:monotone}
\vspace{-0.1cm}

We now state the two monotonicity results and prove them together.

\begin{theorem}\label{thm:UR-monotone}
Let \cref{as:outside} hold and assume $s>c>0$. Then, for all $\lambda\ge 0$ and all $n\ge 2$,
\begin{equation}\label{eq:gossip-mono}
\frac{\partial \UR}{\partial\lambda}> 0,\;\; \frac{\partial \US}{\partial\lambda}< 0.
\end{equation}
\end{theorem}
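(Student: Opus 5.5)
The plan is to differentiate the transformed recursion~\eqref{eq:transformed-recursion} with respect to $\lambda$ and run a backward induction that uses the positivity~\eqref{eq:Mk-inv-nonneg} of $\mathbf{M}_k^{-1}$. By~\eqref{eq:util-g}, it suffices to show that $\mathbf{g}_1' := \partial \mathbf{g}_1/\partial\lambda$ has both components strictly negative.

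Write $a_k := k(n-k)/(n-1)$, so that $\alpha_k = a_k\lambda$. In~\eqref{eq:Mk-explicit}, $\mathbf{M}_k$ depends on $\lambda$ only through the term $\alpha_k\mathbf{I}$. Differentiating~\eqref{eq:transformed-recursion} therefore gives
\begin{equation*}
\mathbf{M}_k\,\mathbf{g}_k' = \alpha_k\,\mathbf{g}_{k+1}' + a_k\bigl(\mathbf{g}_{k+1}-\mathbf{g}_k\bigr),\qquad k=1,\ldots,n-1,
\end{equation*}
with $\mathbf{g}_n' = \mathbf{0}$, because the boundary equation $\mathbf{M}_n\mathbf{g}_n=\rho\eone$ does not involve $\lambda$. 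Since $\mathbf{M}_k^{-1}>0$ componentwise, the desired sign propagates down the recursion provided the increments satisfy $\mathbf{d}_k := \mathbf{g}_k-\mathbf{g}_{k+1} \ge \mathbf 0$ componentwise. Strictness at $k=1$ follows if some $\mathbf{d}_k$ is nonzero, or directly if $\mathbf{d}_1\neq\mathbf 0$. Under that hypothesis, the second induction is immediate. The base step is $\mathbf{g}_{n-1}' = -a_{n-1}\mathbf{M}_{n-1}^{-1}\mathbf{d}_{n-1}$. At each later step, the inductive term $\alpha_k\mathbf{g}_{k+1}'\le \mathbf 0$ and the forcing term $-a_k\mathbf{d}_k\le\mathbf 0$ add, and the strictly positive matrix $\mathbf{M}_k^{-1}$ turns any nonzero nonpositive vector into a strictly negative one. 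Note that $a_1=1>0$.

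The real work is the first induction, namely that $\mathbf{d}_k\ge\mathbf 0$ with $\mathbf{d}_k\neq \mathbf 0$. Write $\mathbf{M}_k = \mathbf{M}_{k+1} - (\alpha_{k+1}-\alpha_k)\mathbf{I} - \tfrac1n\mathbf{B}$, where $\mathbf{B}$ is the last bracketed matrix in~\eqref{eq:Mk-explicit}, so that $\mathbf{B}=\mathbf{T}\diag(c,s)\mathbf{T}^{-1}$. Subtracting consecutive recursions then gives
\begin{equation*}
\mathbf{M}_k\,\mathbf{d}_k = \alpha_{k+1}\,\mathbf{d}_{k+1} + \tfrac1n\,\mathbf{B}\,\mathbf{g}_{k+1},\qquad \mathbf{d}_n:=\mathbf 0 .
\end{equation*}
The base case $k=n-1$ therefore reduces to the forcing term $\tfrac1n\mathbf{B}\mathbf{g}_n$. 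Because $\mathbf{B}\mathbf{g}_{k+1}=\mathbf{T}\diag(c,s)\hvec_{k+1}$, its components are $qc\,h_{k+1}^{(0)}+(1-q)s\,h_{k+1}^{(1)}$ and $c\,h_{k+1}^{(0)}-s\,h_{k+1}^{(1)}$. The first is strictly positive by \cref{lem:strict-interior}. The second is the obstacle. I would try to prove $s\,h_k^{(1)}\ge c\,h_k^{(0)}$ for all $k$ when $s>c$, by a separate backward induction on~\eqref{eq:defect-recursion}, starting from the closed form~\eqref{eq:fn_closed}.

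If that inequality fails, I would instead not require $\mathbf{B}\mathbf{g}_{k+1}\ge\mathbf 0$ termwise. Using $\mathbf{M}_k^{-1}>0$, it would be enough to show that $\mathbf{M}_k^{-1}\bigl(\alpha_{k+1}\mathbf{d}_{k+1}+\tfrac1n\mathbf{B}\mathbf{g}_{k+1}\bigr)\ge\mathbf 0$. Writing out the $2\times2$ inverse explicitly, the strictly positive first component and the negative off-diagonal coupling $-q(1-q)(s-c)$ of $\mathbf{M}_k$ should help absorb a possibly negative second component. As a further fallback, I would redo the first induction in the original $\hvec$ coordinates with the sign-pattern \cref{lem:sign}, and only then map the result through $\mathbf{T}$. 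Once $\mathbf{d}_k\ge\mathbf 0$ with $\mathbf{d}_{n-1}\neq\mathbf 0$ is in hand, the second induction gives $\mathbf{g}_1'<\mathbf 0$. By~\eqref{eq:util-g}, this yields both inequalities in~\eqref{eq:gossip-mono}.
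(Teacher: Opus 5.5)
Your two recursions are correct and are exactly the ones the paper uses: $\mathbf{M}_k\mathbf{g}_k'=\alpha_k\mathbf{g}_{k+1}'-a_k\mathbf{d}_k$ for the $\lambda$-derivative and $\mathbf{M}_k\mathbf{d}_k=\alpha_{k+1}\mathbf{d}_{k+1}+\tfrac1n\mathbf{B}\mathbf{g}_{k+1}$ for the increments. Your second induction is the paper's Step~2.

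\textbf{The gap.} It is the step you flag yourself, namely signing the entry $c\,h_{k+1}^{(0)}-s\,h_{k+1}^{(1)}$, and none of your plans closes it.

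\emph{First plan.} The inequality you propose to prove, $s\,h_k^{(1)}\ge c\,h_k^{(0)}$, is true. However, it makes that entry nonpositive, which is the wrong sign for a termwise argument. The inequality a termwise argument would need, $c\,h_{k+1}^{(0)}\ge s\,h_{k+1}^{(1)}$, is false whenever $\lambda>0$ and $k+1<n$. There the forcing $\tfrac1n\mathbf{B}\mathbf{g}_{k+1}$ has the pattern $(+,-)$ and cannot be signed on its own.

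\emph{First fallback.} This only restates the goal $\mathbf{M}_k^{-1}(\alpha_{k+1}\mathbf{d}_{k+1}+\tfrac1n\mathbf{B}\mathbf{g}_{k+1})\ge\mathbf 0$ and gives no estimate. Moreover, the coupling $-q(1-q)(s-c)$ you cite sits in the $(1,2)$ position of $\mathbf{M}_k$. There, a negative second entry lowers the first component of $\mathbf{d}_k$ rather than being absorbed.

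\emph{Second fallback.} In $\hvec$ coordinates the forcing $\tfrac1n\bigl(c\,h_{k+1}^{(0)},\,s\,h_{k+1}^{(1)}\bigr)$ has the pattern $(+,+)$, which \cref{lem:sign} does not control.

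\textbf{The missing idea.} This is how the paper closes the step: the troublesome entry is a multiple of the very quantity you are inducting on, so it must be kept together with the inherited term $\alpha_{k+1}\mathbf{d}_{k+1}$. Subtract the two rows of $\mathbf{W}_j\hvec_j=\alpha_j\hvec_{j+1}+\rho\one$. The $q_{01}h_j^{(0)}$ and $q_{10}h_j^{(1)}$ terms cancel, leaving
\begin{equation*}
\tfrac{j}{n}\bigl(c\,h_j^{(0)}-s\,h_j^{(1)}\bigr)=-\alpha_j\,d_j^{(2)},
\end{equation*}
where $d_j^{(2)}$ is the second component of $\mathbf{d}_j$.

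At $j=k+1$, the second entry of the right-hand side of your increment recursion becomes
\begin{equation*}
\alpha_{k+1}d_{k+1}^{(2)}-\tfrac{\alpha_{k+1}}{k+1}d_{k+1}^{(2)}=\tfrac{k}{k+1}\alpha_{k+1}d_{k+1}^{(2)}\ge 0
\end{equation*}
under the inductive hypothesis. At the base, \eqref{eq:fn_closed} gives $\hvec_n=\tfrac{\rho}{D_n}(s,c)^\top$. Hence $c\,h_n^{(0)}=s\,h_n^{(1)}$ and the forcing at $k=n-1$ is $(+,0)$. Since the first entry is strictly positive and $\mathbf{M}_k^{-1}>0$, you get $\mathbf{d}_k>\mathbf 0$ at every level.

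\textbf{A smaller slip.} That strictness is needed in your second induction. At $\lambda=0$ all $\alpha_k$ vanish and $\mathbf{g}_1'=-\mathbf{M}_1^{-1}\mathbf{d}_1$. So ``some $\mathbf{d}_k\neq\mathbf 0$'' or $\mathbf{d}_{n-1}\neq\mathbf 0$ does not suffice; you need $\mathbf{d}_1\neq\mathbf 0$.
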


\begin{proof}
By~\eqref{eq:util-g}, it suffices to show $\frac{\partial \mathbf{g}_1}{\partial \lambda}< 0$ component-wise. We define the level increment $\bm{\delta}_k=\mathbf{g}_k-\mathbf{g}_{k+1}$ for $k=1,\ldots,n-1$, which measures the change in the transformed defect between adjacent subset sizes. Since $\hvec_k\!-\!\hvec_{k+1}\!=\!\fvec_{k+1}\!-\!\fvec_k$ by~\eqref{eq:defect-def}, $\bm{\delta}_k$ is the accuracy gain from enlarging the subset, expressed in the coordinates of~\eqref{eq:T-def}. The proof runs two backward inductions, the first signing $\bm{\delta}_k$ and the second using that sign to control $\frac{\partial \mathbf{g}_k}{\partial\lambda}$. Both use $\mathbf{M}_k^{-1}> 0$ from~\eqref{eq:Mk-inv-nonneg}.

\emph{Step 1 (Level increments).} We show that $\bm{\delta}_k> 0$ component-wise for all $k=1,\ldots,n-1$ and all $\lambda\ge 0$. From the definition of $\mathbf{W}_k$ in~\eqref{eq:Wk_vk_def}, we have $\mathbf{W}_{k+1}-\mathbf{W}_k=(\alpha_{k+1}-\alpha_k)\mathbf{I}+\diag\!\big(\frac{c}{n},\frac{s}{n}\big)$. Applying the defect recursion~\eqref{eq:defect-recursion} at levels $k$ and $k+1$ and eliminating $\mathbf{W}_k\hvec_{k+1}$ with this identity, we obtain
\begin{align}\label{eq:gap-recursion}
\mathbf{W}_k(\hvec_k-\hvec_{k+1}) &= \alpha_{k+1}(\hvec_{k+1}-\hvec_{k+2})+\bmat{\frac{c}{n}h_{k+1}^{(0)}\\[3pt] \frac{s}{n}h_{k+1}^{(1)}},
\end{align}
for $k=1,\ldots,n-1$, where the term with $\alpha_{k+1}$ is absent at $k=n-1$ since $\alpha_n=0$. Multiplying~\eqref{eq:gap-recursion} by $\mathbf{T}$ and using $\bm{\delta}_k=\mathbf{T}(\hvec_k-\hvec_{k+1})$ gives
\begin{align}\label{eq:delta-recursion}
\mathbf{M}_k\bm{\delta}_k &= \alpha_{k+1}\bm{\delta}_{k+1}+\frac{1}{n}\bmat{qc\,h_{k+1}^{(0)}+(1-q)s\,h_{k+1}^{(1)}\\[3pt] c\,h_{k+1}^{(0)}-s\,h_{k+1}^{(1)}}.
\end{align}
The first component of the forcing vector in~\eqref{eq:delta-recursion} is strictly positive because $\hvec_{k+1}>0$ by~\eqref{eq:hk-strict}. The second component subtracts one positive quantity from another, so its sign is not immediate. To sign it, we subtract the two rows of the defect recursion, which turns that term into a multiple of $\delta_j^{(2)}$, the quantity the induction already controls. At level $j$, this gives
\begin{equation}\label{eq:row-diff}
\frac{1}{n}\big(c\,h_j^{(0)}-s\,h_j^{(1)}\big)=-\frac{\alpha_j}{j}\,\delta_j^{(2)},\;\; j=1,\ldots,n-1,
\end{equation}
while the same subtraction on the boundary equation $\mathbf{W}_n\hvec_n=\rho\one$ gives $c\,h_n^{(0)}-s\,h_n^{(1)}=0$.

Next, we prove $\bm{\delta}_k> 0$ by backward induction. At $k=n-1$, the term $\alpha_n=0$ removes $\bm{\delta}_n$ from~\eqref{eq:delta-recursion}, and the boundary identity makes the second component of the forcing vanish, so the forcing has the sign pattern $(+,0)$. With $\mathbf{M}_{n-1}^{-1}>0$ from~\eqref{eq:Mk-inv-nonneg}, we obtain $\bm{\delta}_{n-1}>0$. Assume now $\bm{\delta}_{k+1}> 0$ for some $k\le n-2$. The first component of the right-hand side of~\eqref{eq:delta-recursion} equals $\alpha_{k+1}\delta_{k+1}^{(1)}+\frac{1}{n}\big(qc\,h_{k+1}^{(0)}+(1-q)s\,h_{k+1}^{(1)}\big)>0$. The second component equals $\alpha_{k+1}\delta_{k+1}^{(2)}+\frac{1}{n}\big(c\,h_{k+1}^{(0)}-s\,h_{k+1}^{(1)}\big)$, and substituting~\eqref{eq:row-diff} at $j=k+1$ reduces it to $\frac{k}{k+1}\alpha_{k+1}\delta_{k+1}^{(2)}\ge 0$. The right-hand side of~\eqref{eq:delta-recursion} is therefore positive in its first entry and nonnegative in its second, and $\mathbf{M}_k^{-1}>0$ gives $\bm{\delta}_k>0$. In the original coordinates, this says that the receiver-weighted accuracy increment between adjacent subset sizes and the mode gap of that increment are both positive when $s>c$.

\emph{Step 2 (Gossip sensitivity).} Since $\alpha_k=\frac{k(n-k)}{n-1}\lambda$ and $\mathbf{W}_k$ depends on $\lambda$ only through the term $\alpha_k \mathbf{I}$, the matrix $\mathbf{M}_k$ depends on $\lambda$ only through $\alpha_k \mathbf{I}$ as well, so $\frac{\partial \mathbf{M}_k}{\partial\lambda}=\frac{k(n-k)}{n-1}\mathbf{I}$. Differentiating~\eqref{eq:transformed-recursion} with respect to $\lambda$ and using $\mathbf{g}_k-\mathbf{g}_{k+1}=\bm{\delta}_k$ gives
\begin{equation}\label{eq:g-prime-rec}
\mathbf{M}_k\frac{\partial \mathbf{g}_k}{\partial \lambda}=-\frac{k(n-k)}{n-1}\,\bm{\delta}_k+\alpha_k\frac{\partial \mathbf{g}_{k+1}}{\partial \lambda}.
\end{equation}
At the boundary $k=n$, both $\mathbf{M}_n$ and $\rho\eone$ are independent of $\lambda$ since $\alpha_n=0$, so $\frac{\partial \mathbf{g}_n}{\partial \lambda}=0$. Assume $\frac{\partial \mathbf{g}_{k+1}}{\partial \lambda}\le 0$. By Step~1 we have $\bm{\delta}_k>0$, and $\frac{k(n-k)}{n-1}>0$ for $1\le k\le n-1$, so the first term of~\eqref{eq:g-prime-rec} is strictly negative. The second term is nonpositive because $\alpha_k\ge 0$. Since $\mathbf{M}_k^{-1}>0$ by~\eqref{eq:Mk-inv-nonneg}, we obtain $\frac{\partial \mathbf{g}_k}{\partial \lambda}<0$. Backward induction to $k=1$ gives $\frac{\partial \mathbf{g}_1}{\partial \lambda}<0$, and~\eqref{eq:util-g} yields~\eqref{eq:gossip-mono}.
\end{proof}

The two inequalities in~\eqref{eq:gossip-mono} point in opposite directions, so at a fixed policy with $s>c$ the sender never gains from additional gossip while the receivers always do. The receiver utility is therefore strictly increasing in $\lambda$, and $\lambda^\star=\Lambda$ is the unique best response at every network size.

Outside this regime the shape of $\UR(\theta,\cdot)$ is not settled by \cref{thm:UR-monotone}, but its endpoints are. Setting $\lambda=0$ in~\eqref{eq:fk_recursion} decouples the levels, while letting $\lambda$ grow drives every $\fvec_k$ to the boundary solution $\fvec_n$ of~\eqref{eq:fn_closed}, which does not depend on $\lambda$. Let $D_k = q_{01}s + c\,q_{10} + \frac{k\,c\,s}{n}$ extend the boundary quantity $D_n$ of~\eqref{eq:fn_closed} to every level. For $s,c>0$, solving the two limits and subtracting gives
\begin{equation}\label{eq:endpoint-ordering}
\!\UR(\theta,\infty)\! -\! \UR(\theta,0)
\!= \!\frac{c\,q_{01}q_{10}s(n\!-\!1)\bigl(c(1\!-\!q)\!+\!qs\bigr)}{n(q_{01}\!+\!q_{10})\,D_n\,D_1} \!\!>\!\! 0,\!\!
\end{equation}
Gossip is therefore beneficial to the receivers at the two extremes whatever the policy. The best response can fall strictly inside $[0,\Lambda]$ only if $\UR$ rises to an interior peak and then falls away from it, which requires the derivative to turn from positive to negative.

To examine whether such a turn can occur, we evaluated the derivative over the sweep, which covered $134{,}217{,}728$ parameter combinations satisfying Assumption~1. We used $n\in\{2,3,5,8,\allowbreak 12,18,28,44,\allowbreak 68,106,165,257,\allowbreak 399,621,965,1500\}$, $16$ equally spaced values of $q\in[0.02,0.98]$, and $32$ logarithmically spaced values of each of $q_{01}$, $q_{10}$, $s$, and $c$ over $[10^{-3},10^4]$. For each combination, we sampled the derivative at $512$ logarithmically spaced values of $\lambda\in[10^{-5},10^5]$. The derivative changed sign at most once, and every change was from negative to positive. Note that for $n=2$ the top level $\fvec_2$ is the boundary vector of~\eqref{eq:fn_closed}, and the conjecture below can be verified in closed form.

\begin{conjecture}\label{conj:one_dip}
Let \cref{as:outside} hold. For all $n > 1$, $0 < q < 1$, and strictly positive $q_{01}$, $q_{10}$, $s$, and $c$, there is a $\lambda_{\mathrm d}\in[0,\infty)$ such that
\begin{equation}\label{eq:one-dip}
\frac{\partial \UR(\theta,\lambda)}{\partial\lambda}<0 \;\;\text{for}\;\; \lambda<\lambda_{\mathrm d},\;\;
\frac{\partial \UR(\theta,\lambda)}{\partial\lambda}>0 \;\;\text{for}\;\; \lambda>\lambda_{\mathrm d}.
\end{equation}
\end{conjecture}

Thus $\UR(\theta,\cdot)$ falls up to $\lambda_{\mathrm d}$ and rises after it, with $\lambda_{\mathrm d}=0$ covering the case where it rises throughout. A function of this shape attains its maximum over $[0,\Lambda]$ only at an endpoint, so under \cref{conj:one_dip} the receiver's best response is $\lambda^\star\in\{0,\Lambda\}$ at every network size and every followed policy with $s,c>0$. The conjecture is used only where $c>s$, which \cref{thm:UR-monotone} does not cover and which is not the sender's preferred operating regime. The remaining case $s=c$ is handled in \cref{sec:regimes}.

\subsection{Sender Utility in the Remaining Regimes}\label{sec:regimes}
\vspace{-0.1cm}

The complementary regime $c>s$ and the symmetric case $c=s$ follow from a symmetry of the model. Consider the state relabeling $Q^{\mathrm{rel}}(t)=1-Q(t)$. The relabeled system is an instance of the same model in which the source rates $q_{01}$ and $q_{10}$ are interchanged and the push rates $s$ and $c$ are interchanged, while the gossip mechanism is unchanged since gossip decisions depend only on version age. An original policy with $c>s$ therefore maps to a relabeled policy whose state-$1$ push rate exceeds its state-$0$ push rate. Because a receiver holding relabeled state~$1$ holds the original state~$0$, the sender utility of the relabeled system satisfies $U_S^{\mathrm{rel}} = 1-\US(s,c,\lambda)$, and differentiating with respect to $\lambda$ gives
\begin{equation}\label{eq:US-relabel-derivative}
    \frac{\partial U_S^{\mathrm{rel}}}{\partial \lambda} = -\frac{\partial \US}{\partial \lambda}.
\end{equation}
We also record that $\US$ depends on the push rates, the source rates, and $\lambda$ through $\mathbf f_1$ alone, so it does not depend on the receiver weight $q$. This relabeling lets us cover all three regimes of the push rates in the following corollary.

\begin{corollary}\label{cor:direct-sign}
For every fixed $s,c\!>\!0$ and every $\lambda\!\ge\!0$, we have
\begin{align}
    (c-s)\frac{\partial \US(s,c,\lambda)}{\partial \lambda}&\ge 0,\label{eq:direct-sign}\\
    \sgn\bigl(\US(s,c,\lambda)-\pi_1\bigr)&=\sgn(s-c).\label{eq:US-pi1-order}
\end{align}
\end{corollary}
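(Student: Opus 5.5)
Both claims will be proved by splitting into the three regimes $s>c$, $s<c$, and $s=c$. For $s>c$ I will use \cref{thm:UR-monotone} directly, for $s<c$ the relabeling symmetry, and for $s=c$ a direct computation. One point about \cref{thm:UR-monotone}: it assumes \cref{as:outside}, but its sender half only uses the second component $g^{(2)}$. Since $\US$ does not depend on $q$, we may pick any $q\in(0,1)$ that makes $\Delta>0$, and such a $q$ always exists (take $q$ close to $1$). Hence $\partial\US/\partial\lambda<0$ whenever $s>c>0$, with no further condition. This gives \eqref{eq:direct-sign} for $s>c$.

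For $c>s$, apply the relabeling of \cref{sec:regimes}. The relabeled system has push rates $(c,s)$ with state-$1$ rate $c$ larger, and swapped source rates that are still positive. The previous paragraph then gives $\partial U_S^{\mathrm{rel}}/\partial\lambda<0$, and \eqref{eq:US-relabel-derivative} turns this into $\partial\US/\partial\lambda>0$. So $(c-s)\partial_\lambda\US>0$.

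For $s=c$, I expect $h_k^{(0)}=h_k^{(1)}$ at every level. At the boundary this follows from $c\,h_n^{(0)}=s\,h_n^{(1)}$. Inductively, the row difference \eqref{eq:row-diff} gives $c(h_j^{(0)}-h_j^{(1)})=-\tfrac{n\alpha_j}{j}\delta_j^{(2)}$. If I subtract the rows of \eqref{eq:defect-recursion} directly instead, I get $(q_{01}+q_{10}+\tfrac{k}{n}c+\alpha_k)(h_k^{(0)}-h_k^{(1)})=\alpha_k(h_{k+1}^{(0)}-h_{k+1}^{(1)})$, and a backward induction starting from the zero boundary gives $g_k^{(2)}\equiv 0$. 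This holds for every $\lambda$, so $\partial_\lambda\US=0$ and $\US=\pi_1$ by \eqref{eq:US-defect}. That settles both claims at $s=c$.

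For \eqref{eq:US-pi1-order} with $s>c$, I need $g_1^{(2)}>0$. I will integrate in $\lambda$. As $\lambda\to\infty$, every level tends to $\hvec_n$ (as noted before \eqref{eq:endpoint-ordering}). Since $\US$ is strictly decreasing in $\lambda$, $\US(\lambda)>\US(\infty)=\pi_1+h_n^{(0)}-h_n^{(1)}$. The boundary identity gives $h_n^{(0)}=\tfrac{s}{c}h_n^{(1)}>h_n^{(1)}$, so $\US>\pi_1$. A cleaner route that avoids the limit is to induct on $g_k^{(2)}$ directly using the row-difference relation just derived: $g_n^{(2)}>0$, and the forcing $\tfrac{k}{n}(c h^{(0)}_k - s h^{(1)}_k)$ has to be handled. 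I would try the identity $(q_{01}+q_{10}+\alpha_k)g_k^{(2)}+\tfrac{k}{n}(c h_k^{(0)}-s h_k^{(1)})=\alpha_k g_{k+1}^{(2)}$ together with \eqref{eq:row-diff}, but the limit argument is the fallback. The case $c>s$ again follows by relabeling, since $U_S^{\mathrm{rel}}=1-\US>\pi_0^{\mathrm{rel}}=\pi_1$ is equivalent to $\US<\pi_1$. The main obstacle is justifying the $\lambda\to\infty$ limit rigorously; it should follow from \eqref{eq:fk_recursion}, dividing by $\alpha_k$ to get $\fvec_k-\fvec_{k+1}\to 0$ level by level.
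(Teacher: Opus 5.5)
Your proposal is correct and follows the paper's three-case skeleton. The relabeling argument for $c>s$ and the row subtraction of \eqref{eq:defect-recursion} for $s=c$ are exactly what the paper does. Using the $q$-independence of $U_S$ to secure \cref{as:outside} is also the device the paper uses for the relabeled system. Note that this independence is what licenses the choice of $q$, not any feature of the proof of \cref{thm:UR-monotone}, since that proof couples both components of $\mathbf g_k$ through $\mathbf M_k$.

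The genuine difference is the $s>c$ half of \eqref{eq:US-pi1-order}. The paper argues at fixed $\lambda$. Step~1 of the proof of \cref{thm:UR-monotone} already gives $\delta_k^{(2)}=g_k^{(2)}-g_{k+1}^{(2)}>0$ for every $k$, so $g_1^{(2)}\ge g_n^{(2)}=\tfrac{s-c}{s}h_n^{(0)}>0$ directly. You instead combine the strict decrease of $U_S$ in $\lambda$ with the limit $\fvec_k\to\fvec_n$ as $\lambda\to\infty$. That route is valid, and your sketch of the limit suffices. In $(\mathbf W_k-\alpha_k\mathbf I)\fvec_k-\mathbf v_k=\alpha_k(\fvec_{k+1}-\fvec_k)$ the left-hand side is bounded uniformly in $\lambda$, so $\fvec_k-\fvec_{k+1}=O(1/\alpha_k)$, and $\fvec_n$ does not depend on $\lambda$. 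Your route uses only the statement of the theorem rather than its internal level-increment step, at the cost of an extra limiting argument. The ``cleaner route'' you were reaching for is the paper's. The correct row identity is $\alpha_k g_k^{(2)}+\tfrac{k}{n}\bigl(c\,h_k^{(0)}-s\,h_k^{(1)}\bigr)=\alpha_k g_{k+1}^{(2)}$, which is just \eqref{eq:row-diff}, and the sign comes from $\delta_k^{(2)}>0$.

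Two slips need fixing.

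\emph{Row subtraction.} When you subtract the rows of \eqref{eq:defect-recursion}, the terms $q_{01}h_k^{(0)}$ and $q_{10}h_k^{(1)}$ occur in both rows and cancel. At $s=c$ the identity is therefore $\bigl(\tfrac{k}{n}c+\alpha_k\bigr)g_k^{(2)}=\alpha_k g_{k+1}^{(2)}$, with no $q_{01}+q_{10}$. Your derivation via \eqref{eq:row-diff} actually yields this correct form. The conclusion $g_k^{(2)}\equiv 0$ survives because the coefficient is positive either way.

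\emph{Relabeling threshold.} In the relabeled system the probability of relabeled state~$1$ is $\pi_1^{\mathrm{rel}}=\pi_0$. The $s>c$ case applied to the relabeled system therefore gives $1-U_S=U_S^{\mathrm{rel}}>\pi_0$, and this is what is equivalent to $U_S<\pi_1$. The inequality $1-U_S>\pi_1$ that you wrote would only give $U_S<\pi_0$.
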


\begin{proof}
We treat the three cases separately. When $s>c$, the sender-side monotonicity in Theorem~\ref{thm:UR-monotone} gives $\tfrac{\partial \US}{\partial\lambda}< 0$, and since $c-s<0$, inequality in \eqref{eq:direct-sign} follows. For \eqref{eq:US-pi1-order}, the boundary identity $c\,h_n^{(0)}=s\,h_n^{(1)}$ obtained in the proof of Theorem~\ref{thm:UR-monotone} gives $g_n^{(2)}=\tfrac{s-c}{s}h_n^{(0)}>0$, and Step~1 of that proof gives $\bm{\delta}_k>0$ for every $k$, so $g_1^{(2)}\ge g_n^{(2)}>0$. Since $\US=\pi_1+g_1^{(2)}$ by~\eqref{eq:US-defect}, we obtain $\US>\pi_1$.

When $c>s$, the relabeled system has a state-$1$ push rate larger than its state-$0$ push rate, so Theorem~\ref{thm:UR-monotone} applies to it once \cref{as:outside} holds for it. Written in the original rates, that the assumption becomes $q^{\mathrm{rel}}q_{01}>(1-q^{\mathrm{rel}})\,q_{10}$, where $q^{\mathrm{rel}}$ is the receiver weight of the relabeled system, so it holds for every $q^{\mathrm{rel}}\in(\pi_0,1)$. Since $U_S^{\mathrm{rel}}$ does not depend on $q^{\mathrm{rel}}$, we may evaluate it at such a weight, and Theorem~\ref{thm:UR-monotone} gives $\tfrac{\partial U_S^{\mathrm{rel}}}{\partial\lambda}< 0$. Substituting~\eqref{eq:US-relabel-derivative} produces $\tfrac{\partial \US}{\partial\lambda}> 0$, and since $c-s>0$, inequality in \eqref{eq:direct-sign} again holds. The first case applied to the relabeled system gives $U_S^{\mathrm{rel}}>\pi_0$, since the relabeled source is in state~$1$ with probability $\pi_0$, and $\US=1-U_S^{\mathrm{rel}}$ then gives $\US<\pi_1$.

When $c=s$, subtracting the two rows of the defect recursion in \eqref{eq:defect-recursion} gives
\begin{equation}\label{eq:g2-symmetric-rec}
    \left(\alpha_k+\frac{kc}{n}\right)g_k^{(2)} = \alpha_k\, g_{k+1}^{(2)},
\end{equation}
for $k=1,\ldots,n-1$, and at the boundary $k=n$, where $\alpha_n=0$, the same subtraction yields $c\,g_n^{(2)}=0$, so $g_n^{(2)}=0$. The backward recursion in \eqref{eq:g2-symmetric-rec} then forces $g_k^{(2)}=0$ for every $k$. Since $\US=\pi_1+g_1^{(2)}$ by~\eqref{eq:US-defect}, we obtain $\US=\pi_1$, which is independent of $\lambda$. Combining the three cases completes the proof.
\end{proof}
Gossip is content-agnostic, propagating the freshest available packet regardless of its content, so it does not favor either source state. When $s>c$, however, state-$1$ packets enter the network more frequently. After the source switches from state~$1$ to state~$0$, many nodes may continue to hold stale state-$1$ packets and thus declare the sender-preferred state. Once a current state-$0$ packet reaches a node, a higher gossip rate disseminates that corrective packet more rapidly, reducing the persistence of stale state-$1$ declarations. The proof also shows that~\eqref{eq:direct-sign} is strict whenever $s\ne c$. The ordering in~\eqref{eq:US-pi1-order} captures the same policy asymmetry at the utility level, and the equilibrium analysis in~\cref{sec:equilibrium} uses it to compare policies across the different rate regimes.

The symmetric case also completes the receiver side of \cref{thm:UR-monotone}, which assumes $s>c>0$ strictly. At $s=c$ the lower-left entry of the last term of~\eqref{eq:Mk-explicit} vanishes, so $\mathbf{M}_k$ and $\mathbf{M}_k^{-1}$ are upper triangular with positive entries on and above the diagonal, and~\eqref{eq:Mk-inv-nonneg} weakens to $\mathbf{M}_k^{-1}\!\ge\!0$. The proof above gives $g_k^{(2)}\!=\!0$ at every level, so the forcing vectors of~\eqref{eq:delta-recursion} and~\eqref{eq:g-prime-rec} take the forms $(+,0)$ and $(-,0)$, which an upper triangular inverse preserves. Both inductions therefore close, and $\frac{\partial \UR}{\partial\lambda}>0$ holds for every $s\ge c>0$.

\section{Stackelberg Equilibrium and the Effect of Gossip}\label{sec:equilibrium}
\vspace{-0.1cm}

We now combine the budget binding of Theorem~\ref{thm:budget-binding}, the budget-line monotonicity of Corollary~\ref{cor:full-opt}, and the gossip monotonicity of Theorem~\ref{thm:UR-monotone} to characterize the Stackelberg equilibrium.

\subsection{Boundary Characterization of the PC Slack}\label{sec:boundary-anchoring}
\vspace{-0.1cm}
We first identify the two endpoints of the budget line, working throughout with the budget-line PC slack $\Slack(c,\lambda)$ defined below~\eqref{eq:BL-param} and the PC-feasible set $\mathcal{F}_\lambda$ of~\eqref{eq:IC-set-def}. Neither endpoint is PC-feasible. At $c=0$ the sender transmits only in state~$1$, so a node that follows is accurate only while $Q=1$ and obtains $(1-q)\pi_1$, which is below $q\pi_0$ by \cref{as:outside}. At $c=R$ the sender allocates its entire budget to state-$0$ information, so a node that follows obtains exactly the default utility $q\pi_0$, which falls short of $q\pi_0+\eta$. In both cases the receiver team defaults and the sender obtains zero utility.

By Corollary~\ref{cor:full-opt}, the sender-optimal feasible policy is $c^\star=c_{\min}(\lambda)$ whenever $\mathcal F_\lambda$ is nonempty, and the two endpoints place it in the interior, $0<c^\star<R$. Since $c^\star$ is the smallest feasible point and the slack is continuous, the PC binds there, so $\widetilde U_R(c^\star,\lambda)=q\pi_0+\eta$.

\subsection{Equilibrium Characterization}\label{sec:SE}
\vspace{-0.1cm}

\begin{theorem}\label{thm:SE}
Let \cref{as:outside} hold, and let $R>0$, $\Lambda>0$, and $\eta>0$.
\begin{enumerate}[label=(\alph*)]
\item An optimistic Stackelberg equilibrium exists.
\item Let the feasible set $\mathcal{F}_\Lambda$ of~\eqref{eq:IC-set-def} be nonempty with $c_{\min}(\Lambda)<\tfrac{R}{2}$. Then the equilibrium is unique and is given by
\begin{equation}\label{eq:SE_policy}
(s^\star,c^\star)=\bigl(R-c_{\min}(\Lambda),\;c_{\min}(\Lambda)\bigr),\;\;\lambda^\star=\Lambda.
\end{equation}
\item Under \cref{conj:one_dip}, every followed policy with $s,c>0$ induces $\lambda^\star\in\{0,\Lambda\}$.
\end{enumerate}
\end{theorem}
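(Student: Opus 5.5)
The plan is to prove the three parts in the order (b), (a), (c). Part (b) supplies the candidate equilibrium that drives the existence argument. Throughout, write $V(\theta)\triangleq\US(\theta,\mathrm{BR}(\theta))$ for the sender's optimistic value. It equals $\max\{\US(\theta,\lambda):\lambda\in\arg\max_{[0,\Lambda]}\UR(\theta,\cdot)\}$ when $U_R^\star(\theta)\ge q\pi_0+\eta$, and $0$ otherwise.

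\emph{Part (b).} On the strategic half $\{s\ge c>0\}$, \cref{thm:UR-monotone} and the remark following \cref{cor:direct-sign} give $\partial\UR/\partial\lambda>0$. Hence $\lambda^\star(\theta)=\Lambda$ is the unique best response, and $V(\theta)=\US(\theta,\Lambda)$ whenever $\UR(\theta,\Lambda)\ge q\pi_0+\eta$.
\begin{itemize}
\item Among policies in this half, I will apply \cref{thm:budget-binding} and \cref{cor:full-opt} at the fixed rate $\lambda=\Lambda$. Together with the endpoint discussion of \cref{sec:boundary-anchoring}, which rules out $c\in\{0,R\}$, these give the unique maximizer $(R-c_{\min}(\Lambda),c_{\min}(\Lambda))$. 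Since $c_{\min}(\Lambda)<R/2$, it lies strictly inside the strategic half.
\item By \eqref{eq:US-pi1-order}, its value exceeds $\pi_1$. Every policy with $s\le c$ gives $V(\theta)\le\pi_1$ whatever $\lambda^\star(\theta)$ is, because \eqref{eq:US-pi1-order} holds at every $\lambda$. Every non-followed policy gives $0$. So these cannot compete.
\item One subtlety is the step from ``optimal over the strategic half'' to ``optimal over $\Theta$''. Both the budget-binding step and \cref{cor:full-opt} are applied at the fixed $\lambda=\Lambda$. This is legitimate only because $\Lambda$ is the receivers' actual best response at every strategic policy, so off the budget line nothing changes. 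I will state this explicitly.
\item Uniqueness then follows from the strict inequalities: any other optimal policy would have to lie in the strategic half and be PC-feasible at $\Lambda$, which contradicts the uniqueness in \cref{cor:full-opt}.
\end{itemize}

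\emph{Part (a).} I split into three cases.
\begin{itemize}
\item If the hypothesis of (b) holds, existence is immediate.
\item If no policy is followed, then $V\equiv0$ and every $\theta$ is an equilibrium.
\item Otherwise every followed policy has $V(\theta)\le\pi_1$, because by assumption no strategic policy is PC-feasible at $\Lambda$. If some policy with $s=c$ is followed, it attains $\pi_1$ by \cref{cor:direct-sign} and is optimal. In the remaining subcase only policies with $c>s$ are followed, and I need $\sup V$ to be attained.
\end{itemize}
For that last subcase, I first note that $U_R^\star(\theta)$ is continuous on $\Theta$, since it is a maximum of a jointly continuous function over the compact set $[0,\Lambda]$. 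Hence the follow set is closed in $\Theta$. By Berge's maximum theorem, $\arg\max_\lambda\UR(\theta,\cdot)$ is upper hemicontinuous, so the optimistic selection $V$ is upper semicontinuous on the follow set.

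\emph{Main obstacle.} Upper semicontinuity gives attainment only on a compact set, and $\Theta$ is not compact near the origin. I would first try to reduce to the budget line, where the domain $[0,R]$ is compact, by replaying the improving-direction argument of \cref{thm:budget-binding} at $\lambda=\lambda^\star(\theta)$. The difficulty is that after the perturbation the receivers may switch to a different maximizer $\lambda$, and $\US$ could then drop. That argument is clean only where the best response is locally constant. If it fails, the fallback is to show directly that $V$ cannot approach its supremum along sequences tending to the origin. This is plausible because, with the ratio $s/c$ fixed, the limit $\fvec_1$ as $(s,c)\to0$ is governed by the scale-free form of \eqref{eq:fn_closed}. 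I expect this step to be the hardest part of the proof.

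\emph{Part (c).} Under \cref{conj:one_dip}, $\UR(\theta,\cdot)$ is strictly decreasing and then strictly increasing on $[0,\infty)$. Its maximum over $[0,\Lambda]$ is therefore attained only at $0$ or $\Lambda$, so every maximizer lies in $\{0,\Lambda\}$. This covers all followed policies with $s,c>0$. The case $s\ge c$ does not even need the conjecture, since there $\lambda^\star=\Lambda$.
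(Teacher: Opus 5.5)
There is a genuine gap, and it is the step you yourself flag as the main obstacle. It is not confined to the last subcase of part (a), though. \cref{thm:budget-binding} and \cref{cor:full-opt} are conditional statements: they identify an optimal PC-feasible policy at a fixed $\lambda$ \emph{if} one exists with $s^\star,c^\star>0$, but they do not produce one. So in part (b), the claim that ``these give the unique maximizer'' presupposes that $\sup\US(\cdot,\Lambda)$ over $\{\theta\in\Theta:\UR(\theta,\Lambda)\ge q\pi_0+\eta\}$ is attained. Attainment requires that set to be compact, i.e.\ bounded away from the excluded origin. Without it you cannot rule out that some strategic policy with $s+c<R$ beats $\bar\theta=(R-c_{\min}(\Lambda),c_{\min}(\Lambda))$, because the improving-direction argument of \cref{thm:budget-binding} is purely local.

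The same hole appears in your third case of (a). The failure of the hypothesis of (b) is a statement about the budget line only. Concluding from it that no strategic policy anywhere in $\Theta$ is PC-feasible at $\Lambda$ needs an argument such as existence of a maximizer of $\US(\cdot,\Lambda)$ together with budget binding. Since you prove (b) before (a) and let (a) lean on (b), the proposal has no independent source of existence. Your fallback (ray-wise limits of $\fvec_1$ as $(s,c)\to 0$) is pointwise in the ratio $s/c$ and in $\lambda$, whereas a uniform estimate is what is needed.

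The missing idea is to stop studying the sender payoff near the origin and show instead that the PC itself fails uniformly there. The paper multiplies the two rows of \eqref{eq:fk_recursion} by $\pi_1$ and $\pi_0$ and adds them. Because $q_{01}=(q_{01}+q_{10})\pi_1$ and $q_{10}=(q_{01}+q_{10})\pi_0$, this gives the scalar recursion \eqref{eq:cross-recursion} for $\pi_1f_k^{(0)}+\pi_0f_k^{(1)}$, whose push terms are at most $\pi_0\pi_1(s+c)$. Backward induction then gives $\pi_1f_k^{(0)}+\pi_0f_k^{(1)}\le\pi_0\pi_1\bigl(1+\tfrac{s+c}{q_{01}+q_{10}}\bigr)$ at every level and every $\lambda\ge0$. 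Eliminating $f_1^{(0)}$ and using \cref{as:outside} gives \eqref{eq:UR-budget-bound}.

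Hence every followed policy has $s+c\ge\eta(q_{01}+q_{10})/(q\pi_0)$, and the follow set $\Theta_{\mathrm{PC}}$ is compact. Your own Berge/upper-semicontinuity argument then yields existence in one stroke. You need no case split, no reduction to the budget line, and no worry about switching best responses. Part (b) then follows as in the paper. The equilibrium from (a) pays at least $\US(\bar\theta,\Lambda)>\pi_1$, so it has $s^\star>c^\star>0$ and induces $\Lambda$. It maximizes $\US(\cdot,\Lambda)$ over the policies that are PC-feasible at $\Lambda$, and \cref{cor:full-opt} pins it to $c_{\min}(\Lambda)$. Your part (c) is fine.
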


\begin{proof}
\emph{Part (a).} By \cref{thm:backward_recursion}, every $\mathbf{W}_k$ is invertible on $\Theta$ and its entries are affine in $\theta$, so $\UR$ and $\US$ are continuous on $\Theta\times[0,\Lambda]$. Let $\Theta_{\mathrm{PC}}=\{\theta\in\Theta:\UR(\theta,\lambda)\ge q\pi_0+\eta \text{ for some }\lambda\in[0,\Lambda]\}$ collect the policies that the receivers are willing to follow. Every other policy is met by the default response and pays the sender zero. We first show that $\Theta_{\mathrm{PC}}$ is compact.

Boundedness is inherited from $\Theta$. For closedness, let $\theta_\ell\to\theta\in\Theta$ with each $\theta_\ell$ satisfying the PC at some $\lambda_\ell$. The interval $[0,\Lambda]$ is compact, so a subsequence of $\lambda_\ell$ converges to some $\check\lambda$, and continuity gives $\UR(\theta,\check\lambda)\ge q\pi_0+\eta$. The origin is the only limit point of $\Theta$ that $\Theta$ omits, so it remains to keep $\Theta_{\mathrm{PC}}$ away from it. Adding $\pi_1$ times the first row of~\eqref{eq:fk_recursion} to $\pi_0$ times its second row and using $q_{01}\pi_0=q_{10}\pi_1=\rho$ gives
\begin{align}\label{eq:cross-recursion}
(q_{01}&+q_{10}+\alpha_k)\bigl(\pi_1 f_k^{(0)}+\pi_0 f_k^{(1)}\bigr)=\rho+\pi_1\alpha_k^{(0)}h_k^{(0)}\notag\\
&+\pi_0\alpha_k^{(1)}h_k^{(1)}+\alpha_k\bigl(\pi_1 f_{k+1}^{(0)}+\pi_0 f_{k+1}^{(1)}\bigr),
\end{align}
for $k=1,\ldots,n-1$, and without the last term at $k=n$. The two push terms are at most $\pi_0\pi_1(s+c)$, since~\eqref{eq:alpha_def} gives $\alpha_k^{(0)}\le c$ and $\alpha_k^{(1)}\le s$, while~\eqref{eq:defect-def} gives $h_k^{(0)}\le\pi_0$ and $h_k^{(1)}\le\pi_1$. Using this and $\rho=(q_{01}+q_{10})\pi_0\pi_1$ in~\eqref{eq:cross-recursion} at $k=n$ gives $\pi_1 f_n^{(0)}+\pi_0 f_n^{(1)}\le\pi_0\pi_1\bigl(1+\tfrac{s+c}{q_{01}+q_{10}}\bigr)$, and inserting that bound on the right-hand side at level $n-1$ returns the same bound one level down, so backward induction carries it to $k=1$. Eliminating $f_1^{(0)}$ between the bound at $k=1$ and~\eqref{eq:util-f1} leaves $f_1^{(1)}$ with the coefficient $(1-q)-\tfrac{q\pi_0}{\pi_1}$, which is negative by \cref{as:outside}, so dropping that term gives
\begin{equation}\label{eq:UR-budget-bound}
\UR(\theta,\lambda)\le q\pi_0\Bigl(1+\frac{s+c}{q_{01}+q_{10}}\Bigr),
\end{equation}
for every $\theta\in\Theta$ and $\lambda\ge 0$. The PC therefore forces $s+c\ge\tfrac{\eta(q_{01}+q_{10})}{q\pi_0}$, which keeps $\Theta_{\mathrm{PC}}$ away from the origin and makes it compact.

If $\Theta_{\mathrm{PC}}$ is empty, every policy pays the sender zero and every policy is an equilibrium. Otherwise, the receivers maximize the continuous function $\UR(\theta,\cdot)$ over the fixed compact interval $[0,\Lambda]$, so Berge's maximum theorem makes their best-response correspondence nonempty, compact-valued, and upper hemicontinuous on $\Theta_{\mathrm{PC}}$. The optimistic sender payoff maximizes $\US(\theta,\cdot)$ over that correspondence, so it is upper semicontinuous on $\Theta_{\mathrm{PC}}$ and attains its maximum there. That maximum is nonnegative, while every policy outside $\Theta_{\mathrm{PC}}$ pays zero, so it is a maximum over $\Theta$.

\emph{Part (b).} By \cref{thm:UR-monotone}, $\UR(\theta,\cdot)$ is strictly increasing whenever $s>c>0$. Such a policy is followed exactly when it is PC-feasible at $\lambda=\Lambda$, and the receivers then answer it with $\lambda^\star=\Lambda$. The policy $\bar\theta=\bigl(R-c_{\min}(\Lambda),c_{\min}(\Lambda)\bigr)$ is of this kind, since $c_{\min}(\Lambda)>0$ by \cref{sec:boundary-anchoring} and $R-c_{\min}(\Lambda)>c_{\min}(\Lambda)$ follows from $c_{\min}(\Lambda)<\tfrac{R}{2}$, and it pays $\US(\bar\theta,\Lambda)>\pi_1$ by~\eqref{eq:US-pi1-order}.

Every policy outside the class of followed policies with $s>c>0$ yields a sender utility of at most $\pi_1$. A policy that fails the PC at every $\lambda\in[0,\Lambda]$ is met by $\sigma_{\mathrm{default}}$ of~\eqref{eq:BR} and pays the sender $0$. Every policy with $c=0$ is of this kind, since then no node holds a state-$0$ packet and $\UR\le(1-q)\pi_1<q\pi_0$ by \cref{as:outside}. A followed policy with $s=0$ also pays $0$, because the sender never transmits in state~$1$ and no node ever holds a state-$1$ packet. The only remaining policies are followed policies satisfying $0<s\le c$. By~\eqref{eq:US-pi1-order}, each such policy yields a sender utility of at most $\pi_1$ for every gossip rate.

An equilibrium $\theta^\star$ exists by part~(a) and pays at least $\US(\bar\theta,\Lambda)>\pi_1$, so it is followed, satisfies $s^\star>c^\star>0$, and induces $\lambda^\star=\Lambda$. It also maximizes $\US(\cdot,\Lambda)$ among the policies that are PC-feasible at $\Lambda$, because those with $s\le c$ pay at most $\pi_1$ while those with $s>c$ induce $\Lambda$ themselves. \cref{cor:full-opt} applies at $\theta^\star$ and gives $s^\star+c^\star=R$ with $c^\star=c_{\min}(\Lambda)$, which fixes $\theta^\star$ and makes the equilibrium unique.

\emph{Part (c).} Under~\eqref{eq:one-dip}, $\UR(\theta,\cdot)$ decreases up to $\lambda_{\mathrm d}$ and increases after it, so it attains its maximum over $[0,\Lambda]$ at an endpoint.
\end{proof}

Under the conditions of part~(b), Theorem~\ref{thm:SE} reduces the equilibrium computation to locating the smallest PC-feasible point on the one-dimensional budget line. This characterization does not require $\mathcal{F}_\Lambda$ to be connected and holds for every $n\ge 2$ without invoking \cref{conj:one_dip}. The two conditions of part~(b) hold together exactly when $\mathcal{F}_\Lambda$ meets $[0,\tfrac{R}{2})$, so the hypothesis asks only that some policy on the strategic half of the budget line be PC-feasible at the gossip cap, and the equilibrium policy then satisfies $s^\star>c^\star$.

When $\mathcal{F}_\Lambda$ is nonempty but $c_{\min}(\Lambda)\ge\tfrac{R}{2}$, part~(b) does not apply and the equilibrium leaves the strategic half. \cref{cor:full-opt} still places the sender at $c^\star=c_{\min}(\Lambda)$ on the budget line, and~\eqref{eq:US-pi1-order} then bounds the payoff in this regime by $\pi_1$. Furthermore, part~(c) pins the realized gossip rate to endpoints of the gossip rate set, i.e. $\lambda^\star \in \{0,\Lambda\}$, under \cref{conj:one_dip}. At the default parameters of \cref{sec:numerics} with $R=20$, this case occupies the short interval $\Lambda\in(4.92,5.43)$, and $c_{\min}(\Lambda)$ falls to $\tfrac{R}{2}$ at its right endpoint.

\subsection{Who Benefits from Gossip?}\label{sec:benefits}
\vspace{-0.1cm}

At a fixed policy with $s>c$, gossip hurts the sender, but by raising the receiver utility it also relaxes the PC. On the budget line $s=R-c$, we define
\begin{equation}\label{eq:Fstr}
    \mathcal F_{\lambda}^{\mathrm{str}}
    =
    \bigl\{c\in[0,\tfrac{R}{2}): \Slack(c,\lambda)\ge 0\bigr\},
\end{equation}
and let $c_{\min}^{\mathrm{str}}(\lambda)=\min \mathcal F_{\lambda}^{\mathrm{str}}$ whenever the set is nonempty.

\begin{corollary}\label{cor:cmin-lambda-strategic}
Let $n\ge 2$, $R>0$, and $\eta> 0$ be fixed. On the strategic half of the budget line, $c\in[0,\tfrac{R}{2})$, the PC-feasible sets are nested in the realized gossip rate. That is, if $0\le \lambda_1\le \lambda_2$, then
\begin{equation}\label{eq:set-nesting-str}
    \mathcal F_{\lambda_1}^{\mathrm{str}}
    \subseteq
    \mathcal F_{\lambda_2}^{\mathrm{str}}.
\end{equation}
Consequently, whenever both minima are well defined, we have
\begin{equation}\label{eq:cmin-str-mono}
    c_{\min}^{\mathrm{str}}(\lambda_2)
    \le
    c_{\min}^{\mathrm{str}}(\lambda_1).
\end{equation}
\end{corollary}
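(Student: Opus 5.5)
The plan is to reduce the nesting \eqref{eq:set-nesting-str} to pointwise monotonicity of the slack in $\lambda$ for each fixed $c$ in the strategic half, and then read off \eqref{eq:cmin-str-mono} from the nesting. We fix $c\in\mathcal F_{\lambda_1}^{\mathrm{str}}$, so that $c\in[0,\tfrac R2)$ and $\Slack(c,\lambda_1)\ge 0$. It suffices to show $\Slack(c,\lambda_2)\ge\Slack(c,\lambda_1)$. Since $q\pi_0+\eta$ does not depend on $\lambda$, this is the statement that $\widetilde U_R(c,\cdot)$ is nondecreasing on $[0,\infty)$.

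First we dispose of the endpoint $c=0$. There the sender never pushes state-$0$ packets, so as argued in \cref{sec:boundary-anchoring} and in the proof of \cref{thm:SE}(b) we have $\UR\le(1-q)\pi_1<q\pi_0$ for every $\lambda$ by \cref{as:outside}. Hence $\Slack(0,\lambda_1)<0$, so $0\notin\mathcal F_{\lambda_1}^{\mathrm{str}}$, and there is nothing to check at this point. The point $c=0$ lies outside the domain $s,c>0$ of the recursion lemmas, and this argument avoids needing them there.

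For $c\in(0,\tfrac R2)$ we have $s=R-c>c>0$, so \cref{thm:UR-monotone} applies at every $\lambda\ge0$ and gives $\partial\UR(R-c,c,\lambda)/\partial\lambda>0$. The map $\lambda\mapsto\UR$ is differentiable on $[0,\infty)$, since the $\mathbf W_k$ are invertible with entries affine in $\lambda$ and $\mathbf f_1$ is therefore a rational function of $\lambda$ with nonvanishing denominators. The mean value theorem then yields $\widetilde U_R(c,\lambda_2)\ge\widetilde U_R(c,\lambda_1)$, with strict inequality when $\lambda_2>\lambda_1$. Consequently $\Slack(c,\lambda_2)\ge0$, so $c\in\mathcal F_{\lambda_2}^{\mathrm{str}}$, which proves \eqref{eq:set-nesting-str}.

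For \eqref{eq:cmin-str-mono}, assume both minima are well defined. Then $c_{\min}^{\mathrm{str}}(\lambda_1)\in\mathcal F_{\lambda_1}^{\mathrm{str}}\subseteq\mathcal F_{\lambda_2}^{\mathrm{str}}$, and the minimum over the larger set is no larger. The only delicate point is the half-open interval $[0,\tfrac R2)$. Closedness of $\mathcal F_\lambda^{\mathrm{str}}$ could fail at $\tfrac R2$, but the corollary assumes the minima exist, so this is not needed. The main obstacle is therefore only the boundary case $c=0$, which is handled directly as above; all the real work is carried by \cref{thm:UR-monotone}.
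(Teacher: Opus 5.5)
Your proposal is correct and takes essentially the paper's route: you rule out $c=0$ using \cref{as:outside}, then use $s=R-c>c>0$ and \cref{thm:UR-monotone} to get $\widetilde U_R(c,\lambda_2)\ge\widetilde U_R(c,\lambda_1)$, and the nesting and the ordering of the minima follow. Your differentiability and mean-value justification only makes explicit a step the paper leaves implicit.
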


\begin{proof}
Let $c\in\mathcal F_{\lambda_1}^{\mathrm{str}}$, so that $\widetilde U_R(c,\lambda_1)\ge q\pi_0+\eta$. Feasibility forces $c>0$, since $c=0$ gives $\widetilde U_R=(1-q)\pi_1<q\pi_0$ by \cref{sec:boundary-anchoring}. Since $c<\tfrac{R}{2}$, we have $s=R-c>c$, and Theorem~\ref{thm:UR-monotone} gives $\widetilde U_R(c,\lambda_2)\ge\widetilde U_R(c,\lambda_1)\ge q\pi_0+\eta$. Thus, we have $c\in\mathcal F_{\lambda_2}^{\mathrm{str}}$, proving~\eqref{eq:set-nesting-str}, and~\eqref{eq:cmin-str-mono} follows from this set inclusion.
\end{proof}

The two effects act on the equilibrium payoff at once. Writing the equilibrium sender utility on the strategic half as $\widetilde U_S(c_{\min}^{\mathrm{str}}(\lambda),\lambda)$ and differentiating, wherever $c_{\min}^{\mathrm{str}}$ is differentiable, gives
\begin{equation}\label{eq:US-eq-decomp}
\frac{d}{d\lambda}\,\widetilde U_S\!\left(c_{\min}^{\mathrm{str}}(\lambda),\lambda\right)
=\underbrace{\frac{\partial\widetilde U_S}{\partial\lambda}}_{\text{direct}}
+\underbrace{\frac{\partial\widetilde U_S}{\partial c}\frac{dc_{\min}^{\mathrm{str}}}{d\lambda}}_{\text{policy adjustment}}.
\end{equation}
The first term is negative by \cref{thm:UR-monotone}. The second is nonnegative, since $\frac{\partial\widetilde U_S}{\partial c}<0$ by the chain rule and \cref{prop:signs-mono}, and $c_{\min}^{\mathrm{str}}$ is nonincreasing by \cref{cor:cmin-lambda-strategic}. The receiver's fixed-policy gain need not survive into the equilibrium payoff, since the sender picks the smallest PC-feasible policy and the receiver utility is pinned at $q\pi_0+\eta$. We determine which of the two effects dominates numerically in \cref{sec:numerics}.

\section{Numerical Results}\label{sec:numerics}
\vspace{-0.1cm}

In this section, we illustrate the behavior of the Stackelberg game and validate the analysis. Unless otherwise stated, we use $q_{01}=0.5$, $q_{10}=1$, $q=0.55$, and $\eta=0.02$. The steady-state distribution of the source states is then $\pi_0=\tfrac{2}{3}$ and $\pi_1=\tfrac{1}{3}$, and the PC threshold is $q\pi_0+\eta=0.3867$. The default network size is $n=50$, the default sender budget is $R=20$, and the default gossip cap is $\Lambda=20$. In every figure, the continuous curves are evaluated from the backward recursion of Theorem~\ref{thm:backward_recursion} and the markers come from Monte Carlo simulations of the model of \cref{sec:model}.

We first fix the sender policy and vary the gossip rate. Fig.~\ref{fig:receiver-policy-regimes}(a) shows the receiver utility for $(s,c)=(17,3)$, which lies in the strategic regime $s>c$. The utility increases with $\lambda$ at every network size, as stated in Theorem~\ref{thm:UR-monotone}. The improvement grows with $n$, since a larger network receives fewer pushes per node and depends more on gossip to distribute the freshest packet. The figure also shows that PC feasibility is not a property of the policy alone. For $n=50$, the curve starts below the PC threshold and crosses it near $\lambda=10$, so the same policy is infeasible under weak gossip and feasible under strong gossip.

Fig.~\ref{fig:receiver-policy-regimes}(b) shows the mirror policy $(s,c)=(5,15)$, for which $c>s$ and Theorem~\ref{thm:UR-monotone} does not apply. For $n=2$ and $n=5$ the utility increases throughout, while for $n=50$ it dips before it recovers. In all three cases the derivative changes sign at most once and the maximum over $[0,\Lambda]$ is attained at $\lambda=\Lambda$, consistent with Conjecture~\ref{conj:one_dip}. At low gossip rates, nodes may replace accurate but older packets with fresher packets that have already become inaccurate following a source transition, which produces the initial dip. As $\lambda$ increases, corrective packets propagate more rapidly, and this gain eventually outweighs the initial loss.

\begin{figure}[t]
\centering
\subfloat[]{%
    \includegraphics[width=0.48\linewidth]{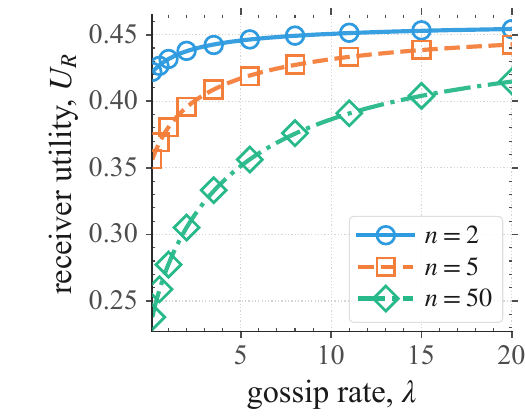}%
    \label{fig:receiver-strategic}%
}
\hfil
\subfloat[]{%
    \includegraphics[width=0.48\linewidth]{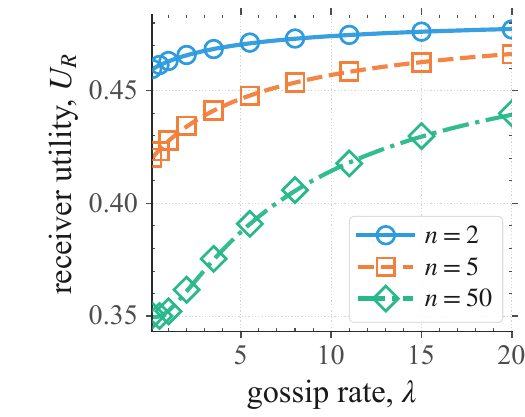}%
    \label{fig:receiver-distortive}%
}
\caption{Receiver utility $U_R$ against the gossip rate $\lambda$ at two fixed sender policies, (a) $(s,c)=(17,3)$ with $s>c$ and (b) $(s,c)=(5,15)$ with $c>s$.}
\label{fig:receiver-policy-regimes}
\end{figure}

Fig.~\ref{fig:fixed-policy-sender}(a) reports the sender utility for three policies that share the same budget. It decreases in $\lambda$ at $(17,3)$, stays at $\pi_1$ at $(10,10)$, and increases in $\lambda$ at $(5,15)$, which is the statement of Corollary~\ref{cor:direct-sign}. In both asymmetric cases gossip moves $\US$ toward the symmetric value $\pi_1$ without reaching it.

Fig.~\ref{fig:fixed-policy-sender}(b) shows which part of the network gossip repairs, at the strategic policy $(17,3)$ with $n=50$. The state-conditional accuracies plotted there are $\Pr(S_j=0\mid Q=0)=\tfrac{f_1^{(0)}}{\pi_0}$ and $\Pr(S_j=1\mid Q=1)=\tfrac{f_1^{(1)}}{\pi_1}$, together with the unconditional fraction $\Pr(S_j=1)$ of nodes declaring state~$1$, which equals $\US$ by~\eqref{eq:util-f1}. As $\lambda$ grows, the state-$0$ accuracy rises steeply, while the state-$1$ accuracy dips at small $\lambda$ and then rises slowly. The initial dip arises from the same freshness--accuracy mismatch as in Fig.~\ref{fig:receiver-policy-regimes}(b). Although both state-conditional accuracies eventually exceed their values at $\lambda=0$, the sender utility decreases because the improvement in the mode-$0$ accuracy, $f_1^{(0)}$, exceeds the corresponding improvement in the mode-$1$ accuracy, $f_1^{(1)}$. Gossip therefore primarily corrects the state that the sender updates less frequently, which is also the more probable state under $\pi_0=\tfrac{2}{3}$.

\begin{figure}[t]
\centering
\subfloat[]{%
    \includegraphics[width=0.48\linewidth]{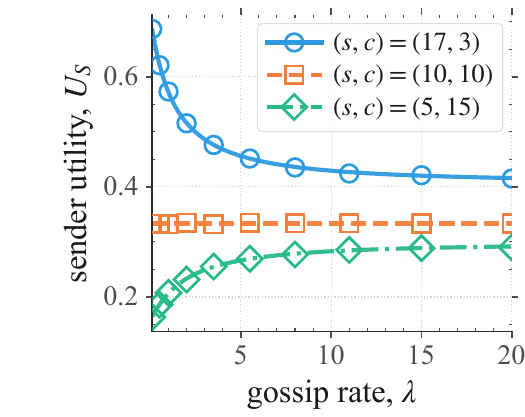}%
    \label{fig:sender-fixed-policy}%
}
\hfil
\subfloat[]{%
    \includegraphics[width=0.48\linewidth]{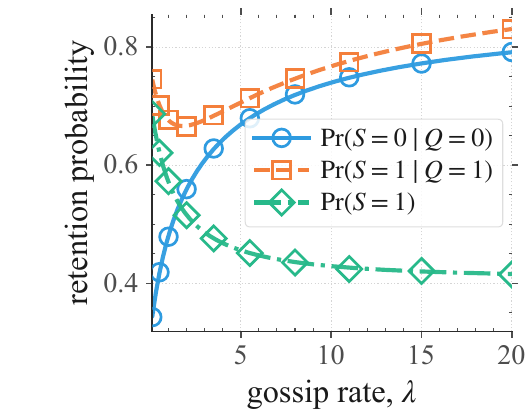}%
    \label{fig:retention}%
}
\caption{Fixed-policy effects of gossip with $n=50$. (a) Sender utility $U_S$ against $\lambda$ at three policies with $s+c=R$. (b) State-conditional accuracies and the fraction of nodes declaring state~$1$ against $\lambda$, at $(s,c)=(17,3)$.}
\label{fig:fixed-policy-sender}
\end{figure}

We now move to the budget line $s=R-c$ and to its strategic half $c<\tfrac{R}{2}$, where the sender operates at the smallest PC-feasible point $c_{\min}^{\mathrm{str}}(\lambda)$ of $\Slack(c,\lambda)$. Fig.~\ref{fig:equilibrium-budget} fixes $\Lambda=20$ and varies the sender budget, showing the equilibrium push rates in Fig.~\ref{fig:equilibrium-budget}(a) and the resulting utilities in Fig.~\ref{fig:equilibrium-budget}(b). Over the whole range $c^\star$ stays well below $\tfrac{R}{2}$, so $s^\star>c^\star$, the receiver's best response is $\lambda^\star=\Lambda$, and the equilibrium is the point $c^\star=c_{\min}^{\mathrm{str}}(\Lambda)$ of Theorem~\ref{thm:SE}. There $c^\star$ remains nearly constant while $s^\star$ increases almost linearly with $R$, and the receiver utility stays at the PC threshold, so the sender collects the entire gain, at a diminishing rate.

The last experiment isolates the effect of gossip on the equilibrium itself. In the strategic regime the receivers gossip at the cap, so the realized gossip rate is $\lambda^\star=\Lambda$, and we vary $\Lambda$ with $R=20$ and $n=50$ held fixed. Fig.~\ref{fig:equilibrium-lambda} puts the sender utility and the constraint that shapes it on one pair of axes. The right axis carries the boundary $c_{\min}^{\mathrm{str}}(\lambda^\star)$. It exists only above the critical gossip rate near $5.4$, where it approaches $\tfrac{R}{2}$ from below, and it decreases afterwards, as Corollary~\ref{cor:cmin-lambda-strategic} states. The equilibrium policy tracks the boundary, so $c^\star$ falls as the gossip rate grows and $s^\star=R-c^\star$ rises with it, and more gossip lets the sender operate at a more biased policy. The left axis carries the equilibrium sender utility, which increases in $\lambda^\star$ at a decreasing slope over the plotted range. This is the opposite of the fixed-policy behavior in Fig.~\ref{fig:fixed-policy-sender}(a), where more gossip lowers $\US$ at $(17,3)$.

The two frozen curves separate the effects that produce this reversal. Each is indexed by a reference gossip rate $\lambda_0$ and holds the policy at the equilibrium value $\theta^\star(\lambda_0)=(R-c_{\min}^{\mathrm{str}}(\lambda_0),\,c_{\min}^{\mathrm{str}}(\lambda_0))$ while the realized gossip rate is swept above $\lambda_0$. For each frozen strategic policy, the policy-adjustment term in~\eqref{eq:US-eq-decomp} vanishes, leaving only the negative direct effect established in Corollary~\ref{cor:direct-sign}. Consequently, both frozen-policy curves decrease with $\lambda$. A frozen curve meets the equilibrium curve at its own $\lambda_0$, where the frozen policy is the equilibrium policy, and falls away from it above.

At the equilibrium the PC binds, so $\UR(\theta^\star,\lambda)=q\pi_0+\eta$. Solving the receiver expression in~\eqref{eq:util-f1} for $f_1^{(0)}(\theta^\star,\lambda)$ and substituting it into the sender expression gives
\begin{equation}\label{eq:eq-US-identity}
    \US(\theta^\star,\lambda)=\frac{f_1^{(1)}(\theta^\star,\lambda)-\eta}{q}.
\end{equation}
The sender's equilibrium payoff therefore depends on the gossip rate only through the state-$1$ accuracy, and the rise in Fig.~\ref{fig:equilibrium-lambda} says that this accuracy grows with $\lambda^\star$. We evaluated the recursion at $2{,}000{,}000$ randomly drawn parameter sets, of which $923{,}550$ satisfy \cref{as:outside} and $362{,}191$ of those are PC-feasible at each of $128$ sampled gossip rates. In each of the latter, the equilibrium sender utility never fell between consecutive gossip rates on the strategic half of the budget line. The draws were independent, with $n$ uniform on $\{2,\ldots,300\}$, $q$ uniform on $[0.02,0.9]$, $q_{01}$ and $q_{10}$ logarithmically uniform on $[10^{-2},1]$, $\eta$ on $[10^{-4},3\!\times\!10^{-2}]$, and $R$ on $[10,10^{2}]$, which keeps the push rate per node comparable to the source rates.

\begin{figure}[t]
\centering
\subfloat[]{%
    \includegraphics[width=0.48\linewidth]{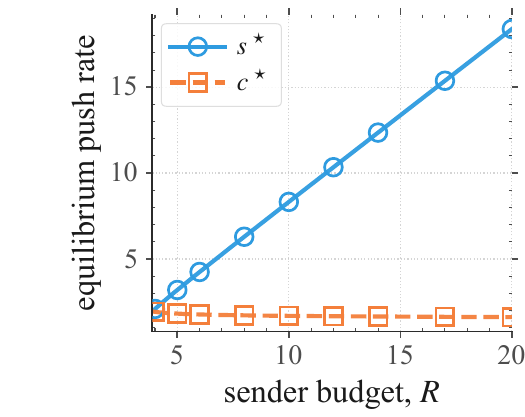}%
    \label{fig:eq-policies-budget}%
}
\hfil
\subfloat[]{%
    \includegraphics[width=0.48\linewidth]{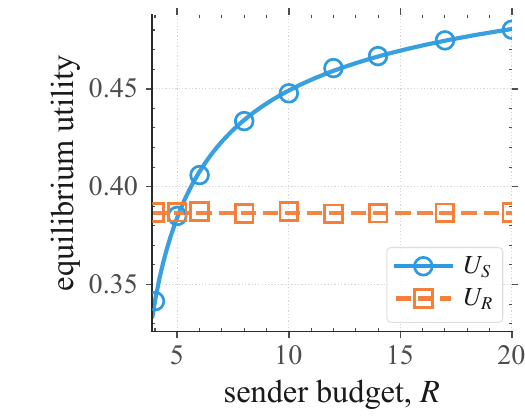}%
    \label{fig:eq-utilities-budget}%
}
\caption{Equilibrium comparative statics in the sender budget $R$, with $n=50$ and $\Lambda=20$. (a) Equilibrium push rates. (b) Equilibrium utilities, with the receiver utility at the PC threshold $q\pi_0+\eta$.}
\label{fig:equilibrium-budget}
\end{figure}

\begin{figure}[t]
\centering
\includegraphics[width=0.8\columnwidth]{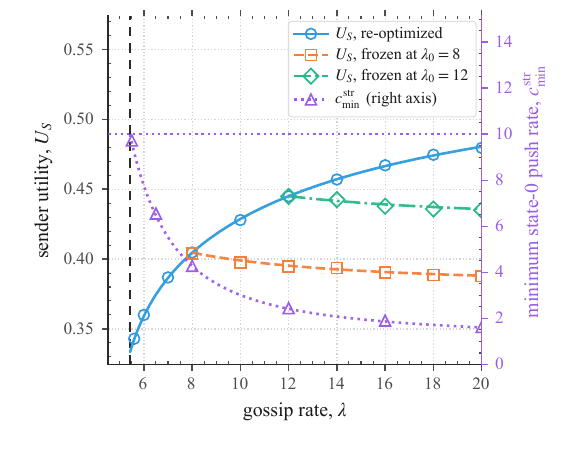} \vspace{-0.35cm}
\caption{Effect of the realized gossip rate $\lambda^\star\!=\!\Lambda$ on the equilibrium. The left axis carries the equilibrium sender utility and the two frozen-policy curves, each of which holds the policy at its equilibrium value $\theta^\star(\lambda_0)$ for a reference rate $\lambda_0$ while $\lambda^\star$ is swept above $\lambda_0$. The right axis carries $c_{\min}^{\mathrm{str}}$. The vertical dashed line marks the gossip rate below which $\mathcal F_{\lambda}^{\mathrm{str}}$ is empty, and the dotted horizontal line marks $\tfrac{R}{2}$ on the right axis.}
\label{fig:equilibrium-lambda}
\vspace{-0.2cm}
\end{figure}

In the regimes examined here, gossip helps the receivers and hurts the sender at a fixed strategic policy, while sender reoptimization returns the receiver utility to $q\pi_0\!+\!\eta$ and raises the sender payoff. Whether the second effect dominates the first at every admissible parameter remains open.

\section{Conclusion}\label{sec:conclusion}
\vspace{-0.1cm}

We studied strategic persuasion in timeliness-based gossip networks, where a sender observes a binary CTMC source and transmits state-dependent updates to a team of receivers who gossip among themselves. We modeled the interaction as a Stackelberg game and analyzed it through the SHS framework, which yields a backward recursion for the steady-state mode-tagged accuracies. We proved that, at a fixed gossip rate, the sender's budget constraint binds at interior optima. This reduces the sender's problem to a one-dimensional optimization on the budget line, where the sender utility is strictly decreasing in the state-$0$ push rate. We then proved that in the strategic regime $s > c$, the receiver utility is strictly increasing in the gossip rate and the sender utility is strictly decreasing, for all network sizes. Beyond this regime, we showed that $(c\!-\!s)\tfrac{\partial\US}{\partial\lambda}\! \ge\! 0$. Gossip therefore has no direction of its own, and it works against the asymmetry already present in the sender's policy rather than reinforcing it. We also proved that an optimistic Stackelberg equilibrium exists, and that it is unique and equal to the smallest PC-feasible policy on the budget line whenever that policy lies on the strategic half. Future work includes costly gossip, where receivers pay for information exchange, and network topologies beyond the fully connected case.

\section*{Acknowledgment}
\vspace{-0.1cm}
The authors used OpenAI’s GPT-5.6 Sol and Anthropic’s Claude Opus 5 for \LaTeX{} formatting, grammar correction, and generating the system model figure. Prompts included shortening technical explanations, improving lengthy derivations, and drawing the TikZ schematic. The authors reviewed and verified all generated content and take full responsibility for the publication.

\bibliographystyle{IEEEtran}
\bibliography{refs}

\end{document}